%
\documentclass[12pt]{article}
\usepackage{amsmath}
\usepackage{graphicx,psfrag,epsf}
\usepackage{enumerate}
\usepackage{natbib}
\usepackage{authblk}
\usepackage{url} 
\usepackage[latin9]{inputenc}
\usepackage{float}
\usepackage{mathtools}
\usepackage{amsmath,amsthm}
\usepackage{amssymb}
\usepackage{setspace}
\usepackage{esint}
\usepackage{appendix}
\usepackage{cleveref}
\doublespacing

\makeatletter

\providecommand{\tabularnewline}{\\}
\floatstyle{ruled}
\newfloat{algorithm}{tbp}{loa}
\providecommand{\algorithmname}{Algorithm}
\floatname{algorithm}{\protect\algorithmname}


\usepackage{latexsym}
\usepackage{amsthm}\usepackage{amsfonts}\usepackage{graphicx}\usepackage{epsfig}
\usepackage{bm}
\newcommand*{\patchAmsMathEnvironmentForLineno}[1]{%
      \expandafter\let\csname old#1\expandafter\endcsname\csname #1\endcsname
      \expandafter\let\csname oldend#1\expandafter\endcsname\csname end#1\endcsname
      \renewenvironment{#1}%
         {\linenomath\csname old#1\endcsname}%
         {\csname oldend#1\endcsname\endlinenomath}}%
    \newcommand*{\patchBothAmsMathEnvironmentsForLineno}[1]{%
      \patchAmsMathEnvironmentForLineno{#1}%
      \patchAmsMathEnvironmentForLineno{#1*}}%
    \AtBeginDocument{%
    \patchBothAmsMathEnvironmentsForLineno{equation}%
    \patchBothAmsMathEnvironmentsForLineno{align}%
    \patchBothAmsMathEnvironmentsForLineno{flalign}%
    \patchBothAmsMathEnvironmentsForLineno{alignat}%
    \patchBothAmsMathEnvironmentsForLineno{gather}%
    \patchBothAmsMathEnvironmentsForLineno{multline}%
    }
\usepackage[mathlines,displaymath]{lineno}


\include{def}
\def\dispmuskip{\thinmuskip= 3mu plus 0mu minus 2mu \medmuskip=  4mu plus 2mu minus 2mu \thickmuskip=5mu plus 5mu minus 2mu}
\def\textmuskip{\thinmuskip= 0mu                    \medmuskip=  1mu plus 1mu minus 1mu \thickmuskip=2mu plus 3mu minus 1mu}
\def\beq{\dispmuskip\begin{equation}}    \def\eeq{\end{equation}\textmuskip}
\def\beqn{\dispmuskip\begin{displaymath}}\def\eeqn{\end{displaymath}\textmuskip}
\def\bea{\dispmuskip\begin{eqnarray}}    \def\eea{\end{eqnarray}\textmuskip}
\def\bean{\dispmuskip\begin{eqnarray*}}  \def\eean{\end{eqnarray*}\textmuskip}

\newtheorem{theorem}{Theorem}

\newtheorem{lemma}{Lemma}

\newtheorem{assumption}{Assumption} 

\newcommand{\wh}{\widehat}
\newcommand{\wt}{\widetilde}
\newcommand{\ov}{\overline}
\newcommand{\bs}{\boldsymbol}
\def\blind{0}





\newcommand{\dlq}{\lq\lq}
\newcommand{\drq}{\rq\rq~}

\usepackage[mathlines,displaymath]{lineno}

\makeatother
\def\prime{\text{\tiny {\it {T}}}}
\addtolength{\oddsidemargin}{-.5in}%
\addtolength{\evensidemargin}{-.5in}%
\addtolength{\textwidth}{1in}%
\addtolength{\textheight}{1.3in}%
\addtolength{\topmargin}{-.8in}%

\makeatletter
\def\@seccntformat#1{\@ifundefined{#1@cntformat}%
   {\csname the#1\endcsname\quad}  
   {\csname #1@cntformat\endcsname}
}
\let\oldappendix\appendix 
\renewcommand\appendix{%
    \oldappendix
    \newcommand{\section@cntformat}{\appendixname~\thesection\quad}
}
\makeatother

\usepackage{xr}
\externaldocument{panel_data_supplement}
\begin{document}
\def\spacingset#1{\renewcommand{\baselinestretch}%
{#1}\small\normalsize} \spacingset{1}


\if0\blind
{
  \title{\bf Efficient Bayesian estimation for flexible panel models for multivariate
outcomes: Impact of life events on mental health and excessive alcohol
consumption\thanks{The research of Gunawan and Kohn was partially supported
by the ARC Center of Excellence grant CE140100049. The research of all the authors was also partially supported by ARC Discovery Grant DP150104630. }  }
  \author[1]{David G. Gunawan}
    \author[1]{Chris K. Carter}
    \author[1]{Denzil G. Fiebig}
    \author[1]{Robert J. Kohn}
\affil[1]{School of Economics, University of New South Wales}
  \maketitle
} \fi

\if1\blind
{
  \bigskip
  \bigskip
  \bigskip
  \begin{center}
    {\LARGE\bf Efficient Bayesian estimation for flexible panel models for multivariate
outcomes: Impact of life events on mental health and excessive alcohol
consumption}
\end{center}
  \medskip
} \fi
\bigskip
\begin{abstract}
We consider the problem of estimating a flexible multivariate
longitudinal panel data model whose outcomes can be a combination of
discrete and continuous variables and whose dependence structures are modelled using copulas. This is a challenging problem because
the likelihood is usually analytically intractable.
Our article makes both a methodological contribution
as well as  a substantive contribution to the application. The methodological contribution is to
introduce into the panel data literature a particle Metropolis within Gibbs method to carry out Bayesian
inference, using a Hamiltonian Monte Carlo \citep{Neal:2011}
proposal for sampling the high dimensional vector of unknown parameters.
Our second contribution is to apply our method to analyse
 the impact of serious life events
on mental health and excessive alcohol consumption.
\end{abstract}

\noindent%
{\it Keywords:} Copula; Hamiltonian Monte Carlo; Particle Gibbs; Pseudo marginal method
\vfill

\spacingset{1.45} 

\section{Introduction}\label{S: introduction}
Our article considers estimating a flexible
longitudinal panel data model with multivariate
outcomes that are a combination of discrete and continuous variables.
In general, estimating such nonlinear and non-Gaussian longitudinal  models is challenging because
the likelihood is an integral over the latent individual random effects and the observations are not Gaussian.
Our article makes two substantive contributions.
First, we introduce into the longitudinal panel data methodology a version of particle Metropolis-within-Gibbs (PMwG) \citep{Andrieu:2010} that allows us to carry out Bayesian inference where the unknown model parameters are generated using a proposal obtained by Hamiltonian Monte Carlo \citep{Neal:2011}. We note that the parameter vector in panel data models is often high dimensional,
usually because there are many covariates, so that a Metropolis-Hastings
proposal based  Hamiltonian  Monte Carlo (HMC) can be much more efficient than competitor proposals such as a random walk; see
\Cref{Sec: PMCMC samplers} for more details.

We show in a simulation study that our PG approach outperforms
two other approaches for estimating such panel data models with random effects and intractable likelihoods.
The first is the standard data augmentation MCMC as in \cite{Albert1993}.
The second
is the pseudo marginal Metropolis-Hastings (PMMH) approach in \cite{Andrieu:2009}  and  \cite{Andrieu:2010}.

The motivation for our methodological development, and our second contribution,
is to investigate
the impact of life events on mental health and excessive alcohol consumption
using the Household, Income, and Labour Dynamics in Australia (HILDA) panel data set.
In the literature that investigates the impact of life events it is standard to consider
a single outcome of interest, e.g. mental health or life satisfaction,
and if multiple outcomes are considered, then  they are estimated as separate
models; see, for example, \citet{Lindeboom:2002}, \citet{Frijters:2011}
and \citet{Buddelmeyer:2016}. We contend that in many cases, we can gain additional insight
by jointly estimating the models for the outcomes.
Although it is unsurprising that there is an association between mental
health problems and excessive alcohol consumption,
establishing the nature of those links requires further research.
A natural approach is to attempt to identify causal effects as in
\citet{Mentzakis:2015}, but this relies on the availability of instruments
or a natural experiment. We use a reduced form approach
but  argue that it has the potential to provide complementary
and useful evidence; see \citet{Kleinberg:2015} for similar arguments.
Providing a description of the joint distribution of related outcomes
is often of interest and has the potential to inform about causal
links, albeit indirectly.

It is common in this literature to simplify the outcomes to allow such joint estimation. For example,
\citet{Contoyannis:2004} consider a joint model of a range of lifestyle
variables in their study of health and lifestyle choice. \citet{Buchmueller:2013}
model choices across several insurance types. Both of these studies
are representative of the methodology where outcomes of interest
are a mix of continuous and various types of discrete variables which
are all converted into binary outcomes in order to accommodate estimation
as a multivariate probit model (MVP) model using simulated maximum likelihood (SML).

The bivariate probit that accommodates the longitudinal nature
of the data serves as our  baseline model. This model is also be of independent
interest motivated by numerous applications of the MVP model; see
for example \citet{Atella:2004} and \citet{Mullahy2016}.
However, in general, there can be associated costs in simplifying the multivariate
structure to fit into the MVP framework. Abstracting from the
mix of outcomes can obscure interesting features of the relationship.
Furthermore, the MVP model imposes a linear correlation
structure that may lead to misspecification risk when the existence
of an asymmetric or nonlinear dependence structure is plausible.
Because the MVP model is limited in these ways, we
also consider models  that accommodate one continuous outcome and one that
is categorical. Here we maintain the assumption of normality for the
marginal distributions while allowing both normal and  non-normal dependence of
error terms, where non-normal dependence is obtained by using copulas.
Problems induced by correlated individual
effects are addressed with Mundlak type specifications that perform
reasonably well in practice; see for example \citet{Contoyannis:2004b}
and \citet{Woolridge:2005}.

The paper is organised as follows. The panel data model is outlined in Section \ref{sec:General-Panel-Data}. The Bayesian estimation methodology is given in Section \ref{Sec: PMCMC samplers}. The HILDA dataset is described in Section \ref{S: data and characterization}. Section \ref{S: results and discussion} discusses the estimation results. Section \ref{S: conclusions} concludes. The paper has two appendices. Appendix \ref{sec:lemma1} provides some proofs of theorem defined  in Section \ref{Sec: PMCMC samplers}. Appendix \ref{sec:Empirical-Results} presents some additional empirical results. The paper also has an online supplement whose sections are denoted as Sections~S1, etc.

\section{General Panel Data Models with Random Effects\label{sec:General-Panel-Data}}
Our motivating example is to investigate the impact of life events
on excessive alcohol consumption $(y_1)$ and mental health $(y_2)$. To match our motivating
example, we describe the following panel data models with two outcomes.
The extension to three or more outcomes is conceptually straightforward,
but can be much more demanding computationally.
\Cref{ss: biv probit with random effects} defines a bivariate probit model which serves as a baseline model
for comparison with the models discussed later.
\Cref{ss: mixed biv model with random effects} then extends the model to accommodate
one continuous outcome and one categorical variable.
Both of these models impose a linear
correlation structure that represents misspecification risk when the
existence of asymmetric or non-linear dependence structures is plausible.
Then, we extend the model such that we still maintain the assumption
of normality for the marginal distributions while introducing non-normal
dependence of error terms using copulas and is given in Section \ref{SS: copulal models}. Lastly, we also define and discuss briefly the Mundlak type specifications which are used for all the models in this section (see
\citet[pg 615-616]{Mundlak:1978,Woolridge:2010}). Note that the models defined in this section can be applied more generally to any variables of interest that can be combination of discrete and continuous variables and they are not restricted to only
the applications in this paper.

\subsection{Bivariate Probit Model with Random Effects}\label{ss: biv probit with random effects}
We first consider the joint distribution
of two binary outcomes given by a bivariate probit model.
Let $y_{1,it}$ and $y_{2,it}$ be the two observed binary outcomes,
for $i=1,...,P$ people and and $t=1,...,T$ time periods.
The bivariate probit model is defined using the following latent variable specification
\begin{align}
y_{1,it}^{*} &=\boldsymbol{x}_{1,it}^\prime\boldsymbol{\beta}_{11}+\overline{\bs x}_{1,i}^{\prime}\boldsymbol{\beta}_{12}+\alpha_{1,i}+\varepsilon_{1,it}
\quad \text{and} \quad
y_{2,it}^{*}=\boldsymbol{x}_{2,it}^{\prime}\boldsymbol{\beta}_{21}+\overline{\bs x}_{2,i}^{\prime}\boldsymbol{\beta}_{22}+\alpha_{2,i}+\varepsilon_{2,it},
\label{eq: biv probit}\\
\intertext{where}
\boldsymbol{\alpha}_{i}&:=\left ( \alpha_{1,i}, \alpha_{2,i}\right )^\prime \sim N \left ( \boldsymbol{0}, \boldsymbol{\Sigma_{\alpha}} \right ) \quad
\text{and} \quad
\boldsymbol{\varepsilon}_{it}=\left ( \varepsilon_{1,it}, \varepsilon_{2,it}\right )^\prime \sim  N \left ( \boldsymbol{0}, \boldsymbol{\Sigma_{\varepsilon}} \right ) \label{eq: alpha and vareps distns}
\intertext{with}
\boldsymbol{\Sigma_{\alpha}} & := \begin{pmatrix} \tau_1^2 & \rho_\alpha \tau_1 \tau_2 \\
\rho_\alpha \tau_1 \tau_2 & \tau_1^2 \end{pmatrix}  \quad \text{and} \quad
\boldsymbol{\Sigma_{\varepsilon}}  := \begin{pmatrix} 1 & \rho_\varepsilon \\
\rho_\varepsilon  & 1 \end{pmatrix}. \label{eq: cov matrices}
\end{align}
In \cref{eq: biv probit}, $\boldsymbol{x}_{j,it},j=1,2,$ are the exogenous variables which may be associated with the two outcomes (including
serious/major life-events) and
\begin{align*}
\overline {\boldsymbol{x}}_{i} &:= \left ( \overline{\boldsymbol{x}}_{1,i}^\prime, \overline{\boldsymbol{x}}_{2,i}^\prime \right)^\prime  = \frac1T\sum_{t=1}^T \boldsymbol{x}_{it}^{(M)}
\end{align*}  is the average over the sample period of the
observations on a subset of the exogenous variables $\boldsymbol{x}_{it}^{(M)}$.
In \cref{eq: biv probit,eq: alpha and vareps distns},
$\boldsymbol{\alpha}_{i}$ is an individual-specific
and time invariant random component comprising potentially correlated
outcome-specific effects, $\boldsymbol{\varepsilon}_{it}$ is the
idiosyncratic disturbance term that varies over time and individuals
and is assumed to be bivariate normally distributed allowing for
contemporaneous correlation but otherwise uncorrelated across individuals
and time and also uncorrelated with $\boldsymbol{\alpha}_{i}$.
 In this model
$y_{1,it}^{*}$ and $y_{2,it}^{*}$ are unobserved. The observed
binary outcomes are defined as
\begin{align}
y_{1,it}:=I\left(y_{1,it}^{*}>0\right) \quad \text{and} \quad
y_{2,it}:=I\left(y_{2,it}^{*}>0\right).   \label{eq: obsns baseline model}
\end{align}
The explanatory variables $\left(\boldsymbol{x}_{it},i=1, \dots, P, t=1, \dots, T\right)$ are also
assumed to be exogenous with respect to the individual random effects
$\boldsymbol{\alpha}_{i}$.

Including
 terms $\overline{\bs x}_{1,i}^{\prime}\boldsymbol{\beta}_{12}$ and
$\overline{\bs x}_{2,i}^{\prime}\boldsymbol{\beta}_{22}$ to \cref{eq: biv probit} is called the \citet{Mundlak:1978} correction
because we can now consider
$\left (\alpha_{1i} + \overline{\bs x}_{1,i}^{\prime}\boldsymbol{\beta}_{12},
\alpha_{2i} + \overline{\bs x}_{2,i}^\prime\boldsymbol{\beta}_{12}
\right )^\prime $ as the $i$th composite random effect which is now potentially correlated
with the exogenous covariates $\bs x_{it}$.

The joint density, conditional to the
vector of individual random effects $\boldsymbol{\alpha}_{i}$,  is
\begin{align}\label{eq: biv probit likel}
p\left(\boldsymbol{y}|\boldsymbol{\theta},\boldsymbol{\alpha}\right)
 &=\prod_{i=1}^{P}\prod_{t=1}^{T}
\Phi_{2}\left(\boldsymbol{\mu}_{it} ,  \boldsymbol{\Sigma}_{\rm probit} \right )
\end{align} 
where $\boldsymbol{\Sigma}_{\rm probit}$ is a $2 \times 2$ covariance matrix  with ones on the diagonal and
$(2y_{1,it}-1)(2y_{2,it}-1)\rho_{\varepsilon}$ on the off diagonal, and
\begin{align}\label{eq: biv probit mu}
\boldsymbol{\mu}_{it} &:=\left (  (2y_{1,it}-1) \left(\boldsymbol{x}_{1,it}^{\prime}\boldsymbol{\beta}_{11}+\boldsymbol{\overline{x}}_{1,i}^{\prime}
\boldsymbol{\beta}_{12}+
\alpha_{1,i}\right),(2y_{2,it}-1)\left(\boldsymbol{x}_{2,it}^{\prime}\boldsymbol{\beta}_{21}+\boldsymbol{\overline{x}}_{2,i}^{\prime}\boldsymbol{\beta}_{22}+
\alpha_{2,i}\right)
\right )^\prime \ .
\end{align}

\subsection{Mixed marginal bivariate Model with Random Effects}\label{ss: mixed biv model with random effects}
We next consider an extension to the baseline model \crefrange{eq: biv probit}{eq: obsns baseline model}, where
we treat one of the outcomes $y_{2,it}=y_{2,it}^{*}$ as an observed continuous variable, with $y_{1,it}$ still discrete as in
\cref{eq: obsns baseline model}. This applies to the mental health variable where continuous values are available.
The joint density conditional on the vector of individual
random effects $\boldsymbol{\alpha}_{i}$ is
\begin{align*}
p\left(\boldsymbol{y}|\boldsymbol{\theta},\boldsymbol{\alpha}\right) & =  \prod_{i=1}^{P}\prod_{t=1}^{T}
\left[\Phi\left(\frac{\mu_{1|2}}{\sigma_{1|2}}\right)\phi\left(y_{2,it}-\left(\boldsymbol{x}_{2,it}^{\prime}
\boldsymbol{\beta}_{21}+\overline{\boldsymbol{x}}_{2,i}^{\prime}\boldsymbol{\beta}_{22}+\alpha_{2,i}\right)\right)\right]^{y_{1,it}}\\
 & \left[\left(1-\Phi\left(\frac{\mu_{1|2}}{\sigma_{1|2}}\right)\right)\phi\left(y_{2,it}-\left(\boldsymbol{x}_{2,it}^{\prime}
 \boldsymbol{\beta}_{21}+\overline{\boldsymbol{x} }_{2,i}^{\prime}\boldsymbol{\beta}_{22}+\alpha_{2,i}\right)\right)\right]^{1-y_{1,it}}
 \intertext{where $\phi(\cdot)$ is the standard normal pdf, $\sigma_{1|2}=\sqrt{1-\rho_\varepsilon^{2}}$ and }
\mu_{1|2} &=\left(\boldsymbol{x}_{1,it}^{\prime}\boldsymbol{\beta}_{11}+\overline{\boldsymbol{x}}_{1,i}^{\prime}\boldsymbol{\beta}_{12}+
\alpha_{1,i}\right)+
\rho_\varepsilon\left(y_{2,it}-\left(\boldsymbol{x}_{2,it}^{\prime}\boldsymbol{\beta}_{21}+
\overline{\boldsymbol{x}}_{2,i}^{\prime} \boldsymbol{\beta}_{22}+\alpha_{2,i}
 \right)\right )
\end{align*}
We assume here and in \Cref{ss: biv probit with random effects}
that the error term $\bs \varepsilon_{it}$ is bivariate normal and consequently impose
a linear correlation structure.
\Cref{SS: copulal models} maintains the assumption of normality
for the marginal distributions of $\bs \varepsilon_{it}$, while introducing non-normal dependence
of the error terms using bivariate copulas.

\subsection{Bivariate Copula Models}  \label{SS: copulal models}

Copula based models provide a flexible approach to multivariate modeling because they can:
 (i)~capture a wide range of non-linear dependence between the marginals beyond simple linear correlation;
  (ii)~allow the marginal distributions to
come from different families of distributions; and, in particular, (iii)~allow the marginal distributions
to be a combination of discrete and continuous distributions as in \cref{ss: mixed biv model with random effects}.
There are many possible parametric copula functions proposed in the
Statistics and Econometric literatures, with the choice of parametric copula determining
the dependence structure of the variables being analysed.
\citet{Trivedi2005} discuss some of the most popular copulas.
A major difference between copula distribution functions is the range of their
dependence structures.
Our article considers the Gaussian, Gumbel and
Clayton copulas, which are three of the most commonly used
bivariate copulas and they are able to capture a wide range of dependence structures. Note that the baseline model is a Gaussian copula. \Cref{sec:Archimedian-and-Elliptical copulas} gives some details of copula models. For a discussion of copula based models with a combination of discrete and continuous marginals,
and their estimation methods see \cite{Pitt2006} and  \cite{Smith2012}. Their method augments the copula model with latent variables which are generated within an MCMC scheme. Note that in our estimation, we do not generate any copula latent variables and we work directly with the conditional density given in Equation \eqref{eq: mixed copula model} below. They also do not consider any individual random effects in their models.

We use the copula framework to obtain a more flexible joint distribution
for  $\bs {\varepsilon}_{it}$, while assuming that its two marginals are normally distributed. Let
$c(\cdot; \bs \theta_{\rm copula}) $ be a bivariate copula density with $\bs{\theta}_{\rm copula}  $
the vector of parameters of the copula. Suppose that $\bs{u}_{it}$ has density $c(\cdot; \bs \theta_{\rm copula}) $ and define
$\varepsilon_{j,it}:= \Phi^{-1} (u_{j,it})$ for  $ j=1,2$, where $\Phi(\cdot)$ the standard normal cdf.
Then the density of $\boldsymbol{\varepsilon}_{it}$ is
\begin{align} \label{eq: biv copula}
c(\bs{u}_{it}; \boldsymbol{\theta}_{\rm copula}) \phi(\varepsilon_{1,it}) \phi( \varepsilon_{2,it})
\end{align}

The joint density of the observations conditional on the vector of individual random
effects for the model in \cref{ss: mixed biv model with random effects} is
\begin{eqnarray} \label{eq: mixed copula model}
p\left(\boldsymbol{y}|\boldsymbol{\theta},\boldsymbol{\alpha}\right) & = & \prod_{i=1}^{P}\prod_{t=1}^{T}\left[\left(1-C_{1|2}\left(u_{1,it}|u_{2,it};\boldsymbol{\theta}_{cop}\right)\right)\phi\left(y_{2,it}-\left(\boldsymbol{x}_{2,it}^{T}\boldsymbol{\beta}_{21}+\overline{\boldsymbol{x}}_{2,i}^{T}\boldsymbol{\beta}_{22}+\alpha_{2,i}\right)\right)\right]^{y_{1,it}}\nonumber \\
 &  & \left[C_{1|2}\left(u_{1,it}|u_{2,it};\boldsymbol{\theta}_{cop}\right)\phi\left(y_{2,it}-\left(\boldsymbol{x}_{2,it}^{T}\boldsymbol{\beta}_{21}+\overline{\boldsymbol{x}}_{2,i}^{T}\boldsymbol{\beta}_{22}+\alpha_{2,i}\right)\right)\right]^{1-y_{1,it}},\label{eq: mixed biv model}
\end{eqnarray}
where $u_{1,it}=\Phi\left(-\left(\boldsymbol{x}_{1,it}^{\prime}\boldsymbol{\beta}_{11}+\overline{\boldsymbol{x}}_{1,i}^{'}\boldsymbol{\beta}_{12}+
\alpha_{1,i}\right)\right)$,
and $u_{2,it}=\Phi\left(y_{2,it}-
\left(\boldsymbol{x} _{2,it}^{'}\boldsymbol{\beta}_{21}+\overline{\boldsymbol{x} }_{2,i}^{'}\boldsymbol{\beta}_{22}+\alpha_{2,i}\right)\right)$.

The conditional distribution function of $U_1$ given $U_2$ in the copula $C\left ( \bs u;\bs{\theta}_{\rm cop}\right)$ is
\begin{align*}
C_{1|2}\left(u_{1}|u_{2};\bs{\theta}_{\rm cop}\right) & =\frac{\partial}{\partial u_{2}}C\left(\bs u ;\bs{\theta}_{\rm cop}\right),\quad
\text{where} \quad
C\left(u_{1},u_{2};\bs{\theta}_{\rm cop}\right)  =\int_{0}^{u_{1}}\int_{0}^{u_{2}}c\left(s_{1},s_{2};\bs{\theta}_{\rm cop}\right){\rm d} s_{1}{\rm d}s_{2}
\end{align*}
and $c\left(\bs u ;\bs \theta_{\rm cop}\right)$ is the density of the copula.

\Cref{sec:Archimedian-and-Elliptical copulas} gives closed form expressions for
the conditional copula distribution functions for the bivariate copulas
used in our article.

The Pearson  correlation coefficient is unsuitable
for  comparing the dependence structures implied
by the different copula models with that of the Gaussian copula,
because it only measures linear dependence.
\Cref{sec:Measures-of-Dependence} discusses Kendall's $\tau$ and the upper and lower tail dependence measures
that we use in the article.

\section{Bayesian Inference and particle Markov chain Monte Carlo samplers} \label{Sec: PMCMC samplers}
This Section discuss efficient Bayesian inference for the random effect
panel data models described in Section \ref{sec:General-Panel-Data}. Our approach is similar to the particle Markov
chain Monte Carlo (PMCMC) approaches of \citet{Andrieu:2010}. However, the
PMCMC approaches in \citet{Andrieu:2010} are derived for state space
models and the random effect panel data models we are interested in this
paper have a different structures, since random effects vary across individual, but they do not change over
time. This requires us to derive the PMCMC approaches for the models
we are interested in from first principles.
The benefit is that the
simple particle structure gives straightforward derivations that make the material more accessible than the current PMCMC literature.
\newline \indent
Let $\bs \theta$ be the parameters in the panel data models described
in Section \ref{sec:General-Panel-Data}. The vector individual random effects is denoted as $\boldsymbol{\alpha}_{1:P}$,
where $P$ is the number of individuals, and the vector of observations for individual $i$ is denoted by $\boldsymbol{y}_i$, for $i = 1, \ldots, P$,
with $\boldsymbol{y}_{1:P}$ denoting all the observations in the sample.
In Bayesian inference, we are interested in sampling from the posterior density
\begin{align} \label{eq: posterior density}
\pi\left(\boldsymbol{\theta},\boldsymbol{\alpha}_{1:P}\right)&:=
p(\boldsymbol{y}_{1:P} |\bs \theta , \bs \alpha_{1:P})p(\bs \alpha_{1:P}|\bs \theta) p(\bs \theta)/Z
\end{align}
where $Z:=p(\boldsymbol{y}_{1:P}) $ is the marginal likelihood.
\newline \indent
The basis of our PMCMC approach is to define a target distribution on an augmented space that includes the parameters and multiple copies of the random effects, which we describe as particles.
This target distribution is used to derive two samplers. The first is the Pseudo Marginal Metropolis-Hasting (PMMH) sampler and the second is the Particle Metropolis within Gibbs (PMwG) sampler.
 The PMMH sampler is an efficient method to sample low dimensional parameters that are highly correlated with the latent states because each
PMMH step generates these parameters without conditioning on the states.
However, in the panel data models defined in \Cref{sec:General-Panel-Data},
the dimension of the parameter space is large so that it is difficult to implement PMMH efficiently. The reason is that it is difficult to obtain good proposals for the parameters that require derivatives of the log-posterior because they cannot be computed exactly
and need to be estimated. The efficiency of PMMH then depends
crucially on how accurately we can estimate the gradient of the log-posterior.
If the error in the estimate of the gradient is too large, then there
will be no advantage in using proposals with derivatives information
over a random walk proposal \citep{Nemeth:2016}. Furthermore, the
random walk proposal has a step size proportional to  $2.56/\sqrt{d}$,
where $d$ is the number of parameters used in the random walk step~\citep{Sherlock:2015}.
This implies that for a large number of parameters, the random walk proposal will move very slowly
and so will be very inefficient.

The PMwG sampler generates the parameters conditioning on the latent random effects
and the parameters of the models can be sampled in separate Gibbs or Metropolis within Gibbs step. Note that by conditioning on the states for the panel data models we are interested in,
we are able to compute the gradient of conditional log-posterior analytically.
Our article uses a Hamiltonian Monte Carlo proposal which requires the gradient of the log posterior to sample the high dimensional parameter $\boldsymbol{\beta}$.  However, this sampler is not very efficient for the parameters that are highly correlated with the latent states. We demonstrate in Section \ref{S: results and discussion} that PMwG sampler performs well for the models that we are interested in this paper.

This section is organised
as follows. \Cref{sec:target distribution} discusses the target distribution. \Cref{sec:Pseudo-Marginal-Metropolis} discusses
the Pseudo Marginal Metropolis-Hastings (PMMH) sampler. \Cref{sec: PMwG} discusses the Particle Gibbs (PG) and Particle Metropolis within Gibbs (PMwG)
samplers. \Cref{sub:Sampling--using Hamiltonian proposal} discusses the Hamiltonian Monte
Carlo proposal. \Cref{sub:TNV} compares our proposed PMwG approach
to some alternative approaches and shows that our method can be much more efficient.

\subsection{Target Distribution}\label{sec:target distribution}

We first describe Algorithm~\ref{alg: IS sampling alg} given below, which constructs a particle approximation to the distribution $\pi\left( \rm d  \bs \alpha_{1:P}|\bs \theta \right)$.
Note that all the models in Section~\ref{sec:General-Panel-Data} have the independence properties
\begin{align} \label{eq:independence likelihood}
p(\bs y |\bs \theta) & = \prod_{i=1}^{P} p(\bs y_i |\bs \theta)
\end{align}
and
\begin{align} \label{eq:independence posterior}
\pi\left(\rm d \bs \alpha_{1:P}|\bs \theta \right)& = \prod_{i=1}^{P}\pi\left(\rm d \bs \alpha_{i}|\bs \theta \right) \\
& = \prod_{i=1}^{P} p \left(\rm d \bs \alpha_{i}|\bs \theta, \mathbf{y}_i \right), \nonumber
\end{align}
where the independence property given in \Cref{eq:independence posterior} will be replicated in our particle appoximation.

Let $\left\{ m_{i}\left(\boldsymbol{\alpha}_{i}|\boldsymbol{\theta},\boldsymbol{y}_{i}\right);i=1,...,P\right\} $ be
a family of proposal densities that we will use to
approximate the corresponding densities $\left\{ \pi \left(\boldsymbol{\alpha}_{i}|\boldsymbol{\theta}\right);i=1,...,P\right\} $.
We define
\begin{align*}
S_{i}^{\boldsymbol{\theta}}& \coloneqq \left\{ \boldsymbol{\alpha}_{i}\in \bs \chi:\pi\left(\boldsymbol{\alpha}_{i}|\bs \theta \right)>0\right\}
\quad
{\rm and}\quad
Q_{i}^{\boldsymbol{\theta}} \coloneqq \left\{ \boldsymbol{\alpha}_{i}\in\bs \chi:m_{i}\left(\boldsymbol{\alpha}_{i}|\boldsymbol{\theta},\boldsymbol{y}_{i}\right)>0\right\} .
\end{align*}

The next assumption ensures that  the proposal
densities $m_{i}\left(\boldsymbol{\alpha}_{i}|\boldsymbol{\theta},\boldsymbol{y}_{i}\right)$
can be used to  approximate approximate the corresponding densities $\left\{ \pi \left(\boldsymbol{\alpha}_{i}|\boldsymbol{\theta}\right);i=1,...,P\right\} $
in Algorithm~\ref{alg: IS sampling alg}.
\begin{assumption}\label{ass: use of IS}
We assume that $S_{i}^{\boldsymbol{\theta}}\subseteq Q_{i}^{\boldsymbol{\theta}}$ for any $\boldsymbol{\theta}\in\boldsymbol{\Theta}$ and $i=1,...,P$
\end{assumption}
Note that Assumption \ref{ass: use of IS} is always satisfied in our implementation
because we use the prior density $p\left(\boldsymbol{\alpha}_{i}|\boldsymbol{\theta}\right)$
as a proposal density and the prior  density is positive everywhere.

The generic Monte Carlo Algorithm~\ref{alg: IS sampling alg} proceeds as follows.
\begin{algorithm}[H]\caption{Monte Carlo Algorithm\label{alg: IS sampling alg}}
\begin{description}
\item For $i=1,...,P$,
\begin{description}
\item [Step (1)]

Sample $\boldsymbol{\alpha}{}_{i}^{j}$ from $m_{i}\left(\boldsymbol{\alpha}_{i}|\boldsymbol{\theta},\boldsymbol{y}_{i}\right)$
for $j=1,...,N$.
\item [Step (2)]
Compute the weights $w_{i}^{j}\coloneqq\frac{p\left(\mathbf{y}_{i}|\boldsymbol{\alpha}{}_{i}^{j},\boldsymbol{\theta}\right)p\left(\boldsymbol{\alpha}{}_{i}^{j}|\boldsymbol{\theta}\right)}{m_{i}\left(\boldsymbol{\alpha}_{i}^{j}|\boldsymbol{\theta},\boldsymbol{y}_{i}\right)},$
for $j=1,...,N$.
\item [Step (3)]
Normalised the weights $\bar{w}_{i}^{j}\coloneqq\frac{w_{i}^{j}}{\sum_{k=1}^{N}w_{i}^{k}}$,
for $j=1,...,N$.
\end{description}
\end{description}
\end{algorithm}
To define the joint distribution of the particles given the parameters, let $\bs \alpha_{1:P}^{1:N}:=
\{ \bs \alpha_1^{1:N}, \dots, \bs \alpha_P^{1:N} \}$.
The joint distribution is
\begin{align} \label{eq:particle dist}
\psi^{\theta}\left(\boldsymbol{\alpha}_{1:P}^{1:N}\right) & \coloneqq\prod_{j=1}^{N}\prod_{i=1}^{P}m_{i}\left(\boldsymbol{\alpha}_{i}^{j}|\boldsymbol{\theta},\boldsymbol{y}_{i}\right).
\end{align}

Under Assumption \ref{ass: use of IS}, Algorithm~\ref{alg: IS sampling alg} yields the approximations to $\pi\left( \rm d  \bs \alpha_{1:P}|\bs \theta \right)$
and the likelihood  $p(\bs y|\bs \theta )$  as
\begin{align*}
\widehat{\pi}_{N}\left(\rm d \bs \alpha_{1:P}|\bs \theta \right)& \coloneqq\prod_{i=1}^{P}\left\{ \sum_{j=1}^{N}\bar{w}_{i}^{j}\delta_{\boldsymbol{\alpha}_{i}^{j}}\left(\rm d\boldsymbol{\alpha}_{i}\right)
\right\}
\end{align*}
and
\begin{align} \label{eq:estimated likelihood}
\widehat{p}_{N}(\bs y |\bs \theta) & \coloneqq\prod_{i=1}^{P}\left\{ \frac{1}{N}\sum_{j=1}^{N}w_{i}^{j}\right\} .
\end{align}

It follows straightforwardly that $\widehat{p}_{N}(\bs y |\bs \theta)$ is an unbiased estimator of the likelihood $p(\bs y |\bs \theta)$.
The proof is given in \Cref{sec:lemma1}.

To obtain particle MCMC schemes to estimate $\pi(\bs \theta, \bs \alpha_{1:P})$,
let $\bs k :=(k_1, \dots, k_P)$, with each $k_i \in \{1, \dots, N\}$, and let
$\bs \alpha_{1:P}^{\bs k}:= \{ \bs \alpha_1^{k_1}, \dots, \bs \alpha_P^{k_P}\}$.  For $N \geq 1$, we define the target density
\begin{align} \label{eq: expanded target density}
\wt {\pi}_N(\bs k, \bs \alpha_{1:P}^{1:N}, \bs \theta)&:= \frac{\pi(\bs \theta, \bs \alpha_{1:P}^{\bs k}) }{N^P} \times
\frac{\psi^\theta (\bs \alpha_{1:P}^{1:N} ) }{ \prod_{i=1}^P m_i (\bs \alpha_i^{k_i} |\bs \theta, \bs y_i) }.
\end{align}
\Cref{sec:lemma1} shows that $N^{-P}\pi\left(\boldsymbol{\theta},\boldsymbol{\alpha}_{1:P}^{\boldsymbol{k}}\right)$
is the marginal probability density of $\widetilde{\pi}_{N}\left(\boldsymbol{k},\boldsymbol{\alpha}_{1:P}^{\boldsymbol{k}},\boldsymbol{\theta}\right)$.
 Using the target density in
\Cref{eq: expanded target density}, the next two sections consider two particle based methods for
carrying Markov chain Monte Carlo in panel data models with an intractable
likelihood.

\subsection{Pseudo Marginal Metropolis Hastings (PMMH) sampler\label{sec:Pseudo-Marginal-Metropolis}}
The PMMH sampler is a Metropolis Hastings update on the extended space with target density defined in Equation \eqref{eq: expanded target density}
and the proposal density for $\bs \theta^\ast,
\bs k^\ast, $ and $
{\bs \alpha}_{1:P}^{\ast 1:N}$, given their current values  $\bs \theta, \bs k$ and $\bs \alpha_{1:P}^{1:N}$, as
\begin{align} \label{eq: extended prop}
 q_N(  \bs k^\ast, {\bs \alpha}_{1:P}^{\ast 1:N}, \bs \theta^\ast | \bs k, \bs \alpha_{1:P}^{1:N}, \bs \theta):= q(\bs \theta^\ast|\bs \theta) \times
\psi^{\theta^\ast} \left( {\bs \alpha}_{1:P}^{\ast 1:N} \right) \times \prod_{i=1}^P {\ov {w}^{\ast}_i}^{k^\ast_i}.
\end{align}
Note that this proposal density first samples $\bs \theta^\ast$ from $q(\bs \theta^\ast | \bs \theta) $.
The ${\bs \alpha}_{1:P}^{\ast 1:N}$  are then sampled from $\psi^{\theta^\ast} (\cdot )$.
Finally, $\bs k^\ast$ is sampled from $\prod_{i=1}^P {\ov {w}^{\ast}_i}^{k^\ast_i}$.

We now consider  the ratio of the extended target \cref{eq: expanded target density} and extended proposal \cref{eq: extended prop}
to obtain the PMMH acceptance probability for this target and proposal. This ratio is
\begin{align} \label{eq: ratio of extended target to prop}
\frac{\wt {\pi}_N\left(\bs k^\ast, {\bs \alpha}_{1:P}^{\ast 1:N}, \bs \theta^\ast\right)}{ q_N\left(\bs k^\ast, {\bs \alpha}_{1:P}^{\ast 1:N}, \bs \theta^\ast | \bs k, \bs \alpha_{1:P}^{1:N}, \bs \theta \right)} & = \frac{\pi(\bs \theta^\ast, \bs \alpha_{1:P}^{\ast \bs k^\ast}) }{N^P} \times
\frac{\psi^{\theta^\ast} \left({\bs \alpha}_{1:P}^{\ast 1:N} \right) }{ \prod_{i=1}^P m_i (\bs \alpha_i^{\ast k^\ast_i} |\bs \theta^\ast, \bs y_i) }\times
\Bigg \{q(\bs \theta^\ast|\bs \theta) \times
\psi^{\theta^\ast} \left( {\bs \alpha}_{1:P}^{\ast 1:N} \right) \times \prod_{i=1}^P {\ov {w}^{\ast}_i}^{k^\ast_i}\Bigg \}^{-1}\notag \\
& = \frac{\wh p_N(\bs y|\bs \theta^\ast )p(\bs \theta^\ast)  }{q(\bs \theta^\ast| \bs \theta)}.
\end{align}
Hence, the acceptance probability is
\begin{align}\label{eq: PMMH acc prob}
\min \Bigg \{ 1, \frac{\wh p_N(\bs y|\bs \theta^\ast )p(\bs \theta^\ast)  }{q(\bs \theta^\ast| \bs \theta)} \frac{q(\bs \theta| \bs \theta^\ast)} {\wh p_N(\bs y|\bs \theta )p(\bs \theta)  }\Bigg \},
\end{align}
which is the acceptance probability of a PMMH scheme based on only estimating $\pi(\theta)$.

The next assumption is needed to ensure
that the PMMH algorithm converges.
\begin{assumption}\label{ass: convergence of pmmh}
 The MH sampler of the target density $\pi\left(\boldsymbol{\theta}\right)$
and proposal density $q\left(\boldsymbol{\theta}^\ast|\boldsymbol{\theta}\right)$
is irreducible and aperiodic.
\end{assumption}

\begin{theorem} \label{thm:convergence of pmmh}

Suppose Assumptions \ref{ass: use of IS} and \ref{ass: convergence of pmmh} hold. Then
 the PMMH algorithm with the expanded target density  \cref{eq: expanded target density} and expanded proposal density
 \cref{eq: extended prop}
 generates a sequence
$\left\{ \boldsymbol{\theta}\left(s\right),\boldsymbol{\alpha}_{1:P}\left(s\right)\right\} $ of iterates
whose marginal distributions $\left\{ \mathcal{\mathcal{L}}_{N}\left\{ \boldsymbol{\theta}\left(s\right),\boldsymbol{\alpha}_{1:P}\left(s\right)\in.\right\} \right\}$
satisfy
\[
||\mathcal{\mathcal{L}}_{N}\left\{ \boldsymbol{\theta}\left(s\right),\boldsymbol{\alpha}_{1:P}\left(s\right)\in\cdot \right\} -\pi\left(\cdot \right)||_{TV}\rightarrow0,\qquad as\;s\rightarrow\infty,
\]
where $|| \cdot||_{TV}$ is the total variation norm.
\end{theorem}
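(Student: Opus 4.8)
The plan is to recognise the PMMH kernel as an ordinary Metropolis--Hastings kernel on the extended space $\{1,\dots,N\}^P\times\bs\chi^{NP}\times\bs\Theta$ with invariant density $\wt\pi_N$ of \cref{eq: expanded target density}, and then to transfer irreducibility and aperiodicity from the idealised marginal chain of Assumption~\ref{ass: convergence of pmmh}.

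\textbf{Step 1: invariance and the marginalisation identity.} The computation in \cref{eq: ratio of extended target to prop} shows that the ratio of the extended target $\wt\pi_N$ to the extended proposal $q_N$ of \cref{eq: extended prop} collapses to $\wh p_N(\bs y|\bs\theta^\ast)p(\bs\theta^\ast)/q(\bs\theta^\ast|\bs\theta)$, so the move with acceptance probability \cref{eq: PMMH acc prob} satisfies detailed balance with respect to $\wt\pi_N$ and hence leaves $\wt\pi_N$ invariant. Assumption~\ref{ass: use of IS} enters here: it makes the weights $w_i^j$ and the estimator $\wh p_N(\bs y|\bs\theta)$ well defined and strictly positive $\psi^\theta$-almost surely, so that the index proposal $\prod_{i=1}^P {\ov {w}^{\ast}_i}^{k_i^\ast}$ in \cref{eq: extended prop} is a legitimate probability mass function and the acceptance ratio genuinely lies in $[0,1]$. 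Combining this with the fact established in \Cref{sec:lemma1} that $N^{-P}\pi(\bs\theta,\bs\alpha_{1:P}^{\bs k})$ is the $(\bs k,\bs\alpha_{1:P}^{\bs k},\bs\theta)$-marginal of $\wt\pi_N$, summing over $\bs k\in\{1,\dots,N\}^P$ and discarding the non-selected particles shows that the $(\bs\theta,\bs\alpha_{1:P})$-marginal of $\wt\pi_N$, reading the selected particles $\bs\alpha_{1:P}^{\bs k}$ as $\bs\alpha_{1:P}$, is exactly $\pi$.

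\textbf{Step 2: irreducibility and aperiodicity.} Here I would follow the pseudo-marginal argument of \citet{Andrieu:2009}. Given any set $A$ with $\pi(A)>0$, Assumption~\ref{ass: convergence of pmmh} provides a set of values $\bs\theta^\ast$ reachable from $\bs\theta$ with positive $q(\cdot|\bs\theta)$-probability and accepted with positive probability by the idealised chain; since $\wh p_N(\bs y|\bs\theta^\ast)>0$ a.s., the same proposal is accepted with positive probability by the extended chain. Conditional on accepting such a $\bs\theta^\ast$, the freshly sampled particles $\bs\alpha_{1:P}^{\ast 1:N}\sim\psi^{\theta^\ast}$ together with the index draw $\bs k^\ast$ put positive mass on every neighbourhood of any point in the support of $\pi(\cdot|\bs\theta^\ast)$, because $S_i^{\theta^\ast}\subseteq Q_i^{\theta^\ast}$ (Assumption~\ref{ass: use of IS}) makes each $m_i$ charge the whole posterior support. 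Chaining these two observations gives $\wt\pi_N$-irreducibility; aperiodicity follows from aperiodicity of the idealised chain together with the positive rejection probability of the extended chain from $\wt\pi_N$-typical states, which rules out any cyclic decomposition of the extended state space.

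\textbf{Step 3: conclusion.} With $\wt\pi_N$ invariant, $\wt\pi_N$-irreducible and aperiodic, the standard convergence theorem for Metropolis--Hastings chains used by \citet{Andrieu:2010} gives $\|\mathcal L_N\{\bs\theta(s),\bs\alpha_{1:P}^{1:N}(s),\bs k(s)\in\cdot\}-\wt\pi_N\|_{TV}\to0$. Since total variation distance does not increase under the measurable projection onto the $(\bs\theta,\bs\alpha_{1:P})$ coordinates, the marginalisation identity of Step~1 yields the stated convergence of $\mathcal L_N\{\bs\theta(s),\bs\alpha_{1:P}(s)\in\cdot\}$ to $\pi$. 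The main obstacle is Step~2: one must check that the extended proposal, which discards and regenerates all $NP$ particles at each sweep and then selects one per individual by weight, genuinely communicates between all parts of the support of $\wt\pi_N$, and this requires Assumption~\ref{ass: use of IS} and Assumption~\ref{ass: convergence of pmmh} to be combined, neither being sufficient on its own.
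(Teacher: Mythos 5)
Your proposal is correct and follows essentially the same route as the paper: the paper's proof of \Cref{thm:convergence of pmmh} is a one-line appeal to Assumption~\ref{ass: use of IS}, Lemmas~\ref{lemma1} and \ref{lemma2}, and Theorem~1 of \citet{Andrieu:2009}, and your three steps simply unpack that citation --- the marginalisation identity is Lemma~\ref{lemma2}, the positivity of the weights is Assumption~\ref{ass: use of IS}, and the transfer of irreducibility and aperiodicity from the idealised marginal chain is exactly the content of the Andrieu--Roberts theorem. Your write-up is more self-contained than the paper's, but it is not a different argument.
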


The proof is given in \Cref{sec:lemma1}

\subsection{Particle Metropolis within Gibbs (PMwG) sampling} \label{sec: PMwG}
We use the same extended target distribution \cref{eq: expanded target density} as before in order to sample from $\pi\left(\boldsymbol{\theta},\boldsymbol{\alpha}_{1:P}\right)$. Let $\boldsymbol{\theta}=\left(\boldsymbol{\theta}_{1},...,\boldsymbol{\theta}_{R}\right)$
be a partition of the parameter vector $\boldsymbol{\theta} $ into $R$ components and let
$\boldsymbol{\Theta}=\left(\boldsymbol{\Theta}_{1},...,\boldsymbol{\Theta}_{R}\right)$
be the corresponding partition of the parameter space $\boldsymbol{\Theta} $.
We will use the notation
$
\boldsymbol{\theta}_{-r}\coloneqq \left(\boldsymbol{\theta}_{1},...,\boldsymbol{\theta}_{r-1},\boldsymbol{\theta}_{r+1},...,\boldsymbol{\theta}_{R}\right).
$
The Particle Gibbs (PG) and
Particle Metropolis within Gibbs (PMwG) algorithm involves the following
steps.
\begin{algorithm}[H] \caption{PMwG Algorithm \label{alg: PMwG}}

\begin{description}
\item [Step (1)]
For $r=1,...,R$
\begin{description}
\item [Step (1.1)]
Sample  $\boldsymbol{\theta}_{r}^\ast $ from  the proposal $q_{r}\left(\cdot| \boldsymbol{k},\boldsymbol{\alpha}_{1:P}^{\boldsymbol{k}},\theta_r, \boldsymbol{\theta}_{-r}
    \right)$.
\item [Step (1.2)]
Set $\boldsymbol{\theta}_{r} = \boldsymbol{\theta}_{r}^\ast$ with probability
 \begin{align*}
 \min\left \{1,
 \frac{\tilde{\pi}_{N}\left(\bs{\theta}_{r}^\ast|\bs{k},\boldsymbol{\alpha}_{1:P}^{\boldsymbol{k}},\bs{\theta}_{-r}\right) }
 {\tilde{\pi}_{N}\left(\bs{\theta}_{r}|\bs{k}, \boldsymbol{\alpha}_{1:P}^{\boldsymbol{k}},\bs{\theta}_{-r} \right)}
 \times\frac{q_{r}\left(\boldsymbol{\theta}_{r} |\boldsymbol{k},\boldsymbol{\alpha}_{1:P}^{\boldsymbol{k}},
\theta^\ast_r, \boldsymbol{\theta}_{-r}\right)}
{q_{r}\left(\boldsymbol{\theta^\ast}_{r} |\boldsymbol{k},\boldsymbol{\alpha}_{1:P}^{\boldsymbol{k}},\theta_r, \boldsymbol{\theta}_{-r} \right)} \right \}
\end{align*}
\end{description}
\item [Step (2)] Sample $\boldsymbol{\alpha}_{1:P}^{-\bs  k}\sim\tilde{\pi}_{N}\left(\cdot|\boldsymbol{k},\boldsymbol{\alpha}_{1:P}^{-\bs k},\boldsymbol{\theta}\right)$
by running the conditional importance sampling Algorithm \ref{alg: CIS}.
\item [Step (3)] Sample the index vector $\boldsymbol{k} = \left(k_{1},...,k_{P}\right)$ with probability
given by
\[
\tilde{\pi}_{N} \left(k_{1}=l_{1},...,k_{P}=l_{P}|\boldsymbol{\theta},\boldsymbol{\alpha}_{1:P}^{1:N} \right)=\prod_{i=1}^{P}\overline{w}_{i}^{l_{i}},
\]
where $w_i^j := w_i^j (\bs \theta, \boldsymbol{\alpha}_{1:P}^{1:N} ) $ and
$\overline{w}_{i}^{l} := w_i^l / \left ( \sum_{s=1}^N w_i^s \right ) $.
\end{description}

\end{algorithm}

It is straightforward to implement Steps~(1) and (3). We implement Step~(2)
using the Conditional Monte Carlo Algorithm \ref{alg: CIS} given below.
Note that Step (1) might appear unusual, but it leaves the augmented
target posterior density $\widetilde{\pi}^{N}\left(\boldsymbol{k},\boldsymbol{\alpha}_{1:P}^{1:N},\boldsymbol{\theta}\right)$
invariant.
This is related to collapsed Gibbs sampler, see, for example \citet[section 6.7]{Liu:2001a}.

\subsection*{Conditional Monte Carlo}
The expression
\[
\frac{\psi^{\boldsymbol{\theta}}\left(\boldsymbol{\alpha}_{1:P}^{1:N}\right)}{\prod_{i=1}^{P}m_{i}\left(\alpha_{i}^{k_{i}}\right)}
\]
appearing in the target density \cref{eq: expanded target density}
 is the density
of all the variables that are generated by the Monte Carlo
algorithm conditional on $\left(\bs  \alpha_{1:P}^{\bs k},\boldsymbol{k} \right)$.
This is a key element of the PMwG Algorithm \ref{alg: PMwG}.
This update can be understood as updating $N-1$ Monte Carlo samples
together while keeping one Monte Carlo sample fixed in $\widetilde{\pi}_N\left(\boldsymbol{\alpha}_{1:P}^{1:N}|\bs \theta \right)$.

\begin{algorithm}[H] \caption{Conditional Monte Carlo Algorithm\label{alg: CIS}}
\begin{description}
\item [Step (1)]
Fix $\boldsymbol{\alpha}{}_{1:P}^{1}=\boldsymbol{\alpha}{}_{1:P}^{\boldsymbol{k}}$.
\item [Step (2)]
For $i=1,...P,$

\begin{description}
\item [Step (2.1)]
Sample $\boldsymbol{\alpha}{}_{i}^{j}$ from $m_{i}\left(\boldsymbol{\alpha}_{i}|\boldsymbol{\theta},\boldsymbol{y}_{i}\right)$
for $j=2,...,N$.
\item [Step (2.2)]
Compute the importance weights $w_{i}^{j}=\frac{p\left(\mathbf{y}_{i}|\boldsymbol{\alpha}{}_{i}^{j},\boldsymbol{\theta}\right)p\left(\boldsymbol{\alpha}{}_{i}^{j}|\boldsymbol{\theta}\right)}{m_{i}\left(\boldsymbol{\alpha}_{i}^{j}|\boldsymbol{\theta},\boldsymbol{y}_{i}\right)},$
for $j=1,...,N$.
\item [Step (2.3)]
Normalise the weights $\bar{w}_{i}^{j}=\frac{w_{i}^{j}}{\sum_{k=1}^{N}w_{i}^{k}}$,
for $j=1,...,N$.
\end{description}
\end{description}
\end{algorithm}

To derive convergence results for the PMwG sampler in Algorithm~\ref{alg: PMwG} we require the following assumption.

\begin{assumption} \label{ass: gibbs1}
The Metropolis within Gibbs sampler that is defined by the proposals  $q_r\left(\cdot|\boldsymbol{\theta}_{r},\boldsymbol{\theta}_{-r},\boldsymbol{\alpha}_{1:P}\right)$,
for $r=1,...,R$, and $\pi\left(\boldsymbol{\alpha}_{1:P}|\boldsymbol{\theta}\right)$
is irreducible and aperiodic.
\end{assumption}

Assumption \ref{ass: gibbs1}  is satisfied in our applications
because all the proposals and conditional distributions have strickly positive densities.



\begin{theorem}  [Convergence of the PMwG sampler\label{thm: converg of PMwG}]
For any $N\geq2$, the PMwG update is a transition kernel for the invariant
density $\tilde{\pi}^{N}$ defined in \cref{eq: expanded target density}.
If Assumptions~\ref{ass: use of IS} and \ref{ass: gibbs1} hold then the PMwG sampler generates
a sequence of iterates $\left\{ \boldsymbol{\theta}\left(s\right),\boldsymbol{\alpha}_{1:P}\left(s\right)\right\} $
whose marginal distributions $\left\{ \mathcal{\mathcal{L}}_{N}\left\{ \boldsymbol{\theta}\left(s\right),\boldsymbol{\alpha}_{1:P}\left(s\right)\in.\right\} \right\} $
satisfy
\[
||\mathcal{\mathcal{L}}_{N}\left\{ ( \boldsymbol{\theta}\left(s\right),\boldsymbol{\alpha}_{1:P}\left(s\right)) \in\cdot \right\}
-\Pi\left \{ ( \boldsymbol{\theta},\boldsymbol{\alpha}_{1:P})\in \cdot \right\}
||_{\rm TV}\rightarrow0,\qquad as\;s\rightarrow\infty.
\]
\end{theorem}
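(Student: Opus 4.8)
The plan is to prove the two assertions of the statement separately. The first assertion — that for every $N\ge2$ the PMwG update admits $\wt\pi_N$ as an invariant density — is essentially algebraic and requires no extra hypotheses; I would establish it by exhibiting Algorithm~\ref{alg: PMwG} as a composition of sub-kernels, each of which preserves $\wt\pi_N$, since a composition of invariant kernels is invariant. The second assertion then upgrades invariance to total-variation convergence using Assumptions~\ref{ass: use of IS} and \ref{ass: gibbs1}, after which the conclusion is transferred from $\wt\pi_N$ to $\Pi$ by marginalisation. I expect the genuinely delicate point to be the invariance of the parameter move in Step~(1), which conditions only on the selected trajectory.

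For the invariance part I would first rewrite the extended target \cref{eq: expanded target density}, using \cref{eq:particle dist} and the definitions of $w_i^{j}$ and $\pi$, in the two product forms
\[
\wt\pi_N\left(\bs k,\bs\alpha_{1:P}^{1:N},\bs\theta\right)=\frac{p(\bs\theta)}{Z\,N^{P}}\,\psi^{\theta}\left(\bs\alpha_{1:P}^{1:N}\right)\prod_{i=1}^{P}w_i^{k_i}=\frac{\pi\left(\bs\theta,\bs\alpha_{1:P}^{\bs k}\right)}{N^{P}}\prod_{i=1}^{P}\prod_{j\ne k_i}m_i\left(\bs\alpha_i^{j}\,|\,\bs\theta,\bs y_i\right).
\]
From the first form one reads off $\wt\pi_N(\bs k\mid\bs\theta,\bs\alpha_{1:P}^{1:N})=\prod_i\ov w_i^{k_i}$, which is exactly the law used in Step~(3); from the second, $\wt\pi_N(\bs\alpha_{1:P}^{-\bs k}\mid\bs k,\bs\alpha_{1:P}^{\bs k},\bs\theta)=\prod_i\prod_{j\ne k_i}m_i(\bs\alpha_i^{j}\mid\bs\theta,\bs y_i)$, which is exactly what Conditional Monte Carlo Algorithm~\ref{alg: CIS} generates in Step~(2). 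Hence Steps~(2) and (3) are exact Gibbs updates for $\wt\pi_N$ and leave it invariant. For Step~(1) I would use the marginalisation identity proved in \Cref{sec:lemma1}, namely that $N^{-P}\pi(\bs\theta,\bs\alpha_{1:P}^{\bs k})$ is the $(\bs k,\bs\alpha_{1:P}^{\bs k},\bs\theta)$-marginal of $\wt\pi_N$, to identify $\wt\pi_N(\bs\theta_r\mid\bs k,\bs\alpha_{1:P}^{\bs k},\bs\theta_{-r})\propto\pi(\bs\theta_r,\bs\theta_{-r},\bs\alpha_{1:P}^{\bs k})$ — i.e.\ the ordinary full conditional of $\bs\theta_r$ under $\pi$ with $\bs\alpha_{1:P}^{\bs k}$ in the role of the random effects — so that Step~(1) is a legitimate Metropolis–within–Gibbs move (with, for the $\bs\beta$-block, the Hamiltonian proposal) targeting that conditional.

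The subtlety — and the step I expect to be the main obstacle — is that Step~(1) changes $\bs\theta$ but does not refresh the non-selected particles, and those depend on $\bs\theta$ through the $m_i$'s, so Step~(1) alone is not $\wt\pi_N$-invariant for the full state. The way I would resolve this is to argue that Steps~(1) and (2) together form a single valid collapsed (partially collapsed) Gibbs block, in the spirit of \citet[Section~6.7]{Liu:2001a}: one conceptually marginalises out $\bs\alpha_{1:P}^{-\bs k}$, updates $\bs\theta$ from the resulting marginal conditional (Step~(1)), and then re-imputes $\bs\alpha_{1:P}^{-\bs k}$ from its full conditional (Step~(2)); because the $\bs\theta$-update conditions only on quantities that are \emph{not} re-imputed afterwards, the composite Step~(1)$\,\circ\,$Step~(2) preserves $\wt\pi_N$, and composing with the invariant Step~(3) proves the first assertion. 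This also explains the hypothesis $N\ge2$: with $N=1$ there is no non-selected particle to re-impute and the $\bs\alpha$-coordinate never moves, so the argument (and the chain) would break down.

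For the convergence part, given invariance the plan is to verify $\phi$-irreducibility and aperiodicity of the PMwG kernel and then invoke a standard ergodic theorem for $\phi$-irreducible aperiodic Markov chains to get $\|\mathcal L_N\{(\bs\theta(s),\bs\alpha_{1:P}^{1:N}(s))\in\cdot\}-\wt\pi_N\|_{\rm TV}\to0$. Irreducibility and aperiodicity are inherited: Assumption~\ref{ass: gibbs1} supplies them for the Metropolis-within-Gibbs sampler on $(\bs\theta,\bs\alpha_{1:P})$, while Assumption~\ref{ass: use of IS} together with the strict positivity of the proposals $m_i$ guarantees that the regeneration step (Step~(2)) followed by the index draw (Step~(3)) can place the selected trajectory on any region carrying positive $\pi(\cdot\mid\bs\theta)$-mass; combining these shows every set of positive $\wt\pi_N$-measure is reached with positive probability in finitely many sweeps, and the Metropolis rejection events give aperiodicity. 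Finally, summing the marginalisation identity over $\bs k$ shows that the law of the selected trajectory $\bs\alpha_{1:P}^{\bs k}$ together with $\bs\theta$ under $\wt\pi_N$ is exactly $\Pi$; since $(\bs\theta(s),\bs\alpha_{1:P}(s))$ is a (fixed) function of the full chain state and passing to the image law cannot increase total-variation distance, $\|\mathcal L_N\{(\bs\theta(s),\bs\alpha_{1:P}(s))\in\cdot\}-\Pi\|_{\rm TV}\le\|\mathcal L_N\{(\bs\theta(s),\bs\alpha_{1:P}^{1:N}(s))\in\cdot\}-\wt\pi_N\|_{\rm TV}\to0$, which is the stated conclusion.
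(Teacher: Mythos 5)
Your proof is correct, and it follows the same overall skeleton as the paper's (invariance of the extended kernel, then irreducibility and aperiodicity, then an ergodic theorem, then marginalisation), but it distributes the effort very differently. The paper's proof of Theorem~\ref{thm: converg of PMwG} simply \emph{asserts} that Algorithm~\ref{alg: PMwG} is a Metropolis-within-Gibbs sampler targeting \cref{eq: expanded target density} and spends all of its effort on irreducibility (showing via Assumption~\ref{ass: use of IS} that sets accessible to the idealised sampler of Assumption~\ref{ass: gibbs1} are marginally accessible to the particle version, then using the exactness of Step~(2) to reach the non-selected coordinates) and on aperiodicity (by contradiction with Assumption~\ref{ass: gibbs1}), before invoking \citet{tierney1994}. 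You instead make the invariance claim the centrepiece: your two factorisations of $\wt\pi_N$ correctly identify Steps~(2) and (3) as exact Gibbs draws and, via Lemma~\ref{lemma2}, identify Step~(1) as a Metropolis move on the $(\bs k,\bs\alpha_{1:P}^{\bs k},\bs\theta)$-marginal; your observation that Step~(1) alone is \emph{not} $\wt\pi_N$-invariant on the full state, and that validity rests on the partially collapsed composition Step~(1)$\,\circ\,$Step~(2), is exactly the point the paper relegates to a one-line remark after Algorithm~\ref{alg: PMwG} citing \citet{Liu:2001a}, and your treatment is the more careful one (it also genuinely explains the hypothesis $N\ge2$, which the paper never uses explicitly). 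Your irreducibility argument coincides in substance with the paper's. Two minor points: your appeal to ``Metropolis rejection events'' for aperiodicity is weaker than the paper's contradiction argument, since a rejection probability could in principle vanish, whereas deducing periodicity of the idealised MwG sampler directly contradicts Assumption~\ref{ass: gibbs1}; and your closing step --- that total variation distance is non-increasing under the push-forward onto $(\bs\theta,\bs\alpha_{1:P}^{\bs k})$, whose law under $\wt\pi_N$ is $\Pi$ --- makes explicit a transfer the paper leaves implicit. Neither point is a gap.
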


The proof is given in \Cref{sec:lemma1}.

\subsection{Sampling the high-Dimensional parameter vector  $\bs \beta$ using a Hamiltonian Proposal\label{sub:Sampling--using Hamiltonian proposal}}

This section discusses  the Hamiltonian Monte Carlo (HMC) proposal
to sample the high dimensional parameter vector $\boldsymbol{\beta}$ from the conditional
posterior density $p\left(\boldsymbol{\beta}|\boldsymbol{\theta}_{-\boldsymbol{\beta}},\boldsymbol{y},\boldsymbol{\alpha}_{1:P}^{\boldsymbol{k}},\boldsymbol{k}\right)$.
It can be used to generate distant proposals for the Particle Metropolis
within the Gibbs algorithm to avoid the slow exploration behaviour
that results from simple random walk proposals.

Suppose we want to sample from a $d$-dimensional distribution with
pdf proportional to $\exp\left(\mathcal{L\left(\boldsymbol{\beta}\right)}\right)$,
where $\mathcal{L}\left(\boldsymbol{\beta}\right)=\log p\left(\boldsymbol{\beta}|\boldsymbol{\theta}_{-\boldsymbol{\beta}},\boldsymbol{y},\boldsymbol{\alpha}_{1:P}^{\boldsymbol{k}},\boldsymbol{k}\right)$
is the logarithm of the conditional posterior density of $\boldsymbol{\beta}$
(up to a normalising constant). In Hamiltonian Monte Carlo \citep{Neal:2011},
we augment an auxiliary momentum vector $\bs r$  having the same dimension as the parameter vector $\bs \beta$
 with the density $p\left(\boldsymbol{r}\right)=N\left(\boldsymbol{r}|0,\bs M\right)$,
where $\bs M$ is a mass matrix of the momentum and often set to the identity matrix.
We define the joint conditional density of $(\bs \beta , \bs r ) $ as
\begin{align}
p\left(\boldsymbol{\beta},\boldsymbol{r}|\boldsymbol{\theta}_{-\boldsymbol{\beta}},\boldsymbol{y},\boldsymbol{\alpha}_{1:P}^{\boldsymbol{k}},\boldsymbol{k}\right)\propto\exp\left(-H\left(\boldsymbol{\beta},\boldsymbol{r}\right)\right)\label{eq:jointHamiltonian-1}
\end{align}
where
\begin{align}\label{eq: hamiltonian}
H\left(\boldsymbol{\beta},\boldsymbol{r}\right):=-\mathcal{L\left(\boldsymbol{\beta}\right)} + \frac12 \boldsymbol{r}^\prime M^{-1} \boldsymbol{r}
\end{align}
is called the Hamiltonian.

In an idealized HMC step, the parameters $\boldsymbol{\beta}$
and the momentum variables $\boldsymbol{r}$  move continuously
according to the differential equations
\begin{align}
\frac{d\boldsymbol{\beta}}{\rm d t} & =\frac{\partial H}{\partial\boldsymbol{r}} =\bs M^{-1}\boldsymbol{r}\\
\frac{d\boldsymbol{r}}{\rm d t }&=-\frac{\partial H}{\partial\boldsymbol{\beta}} =\nabla_{\boldsymbol{\beta}}\mathcal{L\left(\boldsymbol{\beta}\right)},
\end{align}
where $\nabla_{\boldsymbol{\beta}}$ denotes the gradient with respect
to $\boldsymbol{\beta}$. In a practical implementation, the continuous time HMC dynamics
need to be approximated by discretizing  time, using a small step size
$\epsilon$. We can simulate the evolution over time of $\left(\boldsymbol{\beta},\boldsymbol{r}\right)$
via the \dlq leapfrog\drq{} integrator, where one step of the leapfrog update is
\begin{eqnarray*}
\boldsymbol{r}\left( t +\frac{\epsilon}{2}\right) & = & \boldsymbol{r}\left( t \right)+\epsilon\nabla_{\boldsymbol{\beta}}\mathcal{L}\left(\boldsymbol{\beta}\left(t \right)\right)/2\\
\boldsymbol{\beta}\left( t +\epsilon\right) & = & \boldsymbol{\beta}\left(t \right)+\epsilon \bs M^{-1}\boldsymbol{r}\left(t +\frac{\epsilon}{2}\right)\\
\boldsymbol{r}\left(t +\epsilon\right) & = & \boldsymbol{r}\left( t +\epsilon/2\right)+\epsilon\nabla_{\boldsymbol{\beta}}\mathcal{L}\left(\boldsymbol{\beta}\left( t +\epsilon\right)\right)/2
\end{eqnarray*}

Each leapfrog step is time reversible by negating the step size
$\epsilon$. The leapfrog integrator provides a mapping $\left(\boldsymbol{\beta}^\ast,\boldsymbol{r}^\ast\right)\rightarrow\left(\boldsymbol{\beta},\boldsymbol{r}\right)$
that is both time-reversible and volume preserving \citep{Neal:2011}.
 It follows that the Metropolis-Hastings algorithm with acceptance probability \[\min\left(1,\frac{\exp\left(\mathcal{L\left(\boldsymbol{\beta}\right)}-\frac{1}{2}\boldsymbol{r}^\prime \bs M^{-1}\boldsymbol{r}\right)}
 {\exp\left(\mathcal{L\left(\bs \beta^\ast\right)}-\frac{1}{2}{\boldsymbol r^\ast}^\prime \bs M^{-1}\boldsymbol{r^\ast}\right)}\right)\]
produces an ergodic, time reversible Markov chain that satisfies detailed
balance and has stationary density $p\left(\boldsymbol{\beta}|\boldsymbol{\theta}_{-\boldsymbol{\beta}},\boldsymbol{y},\boldsymbol{\alpha}_{1:P}^{\boldsymbol{k}},\boldsymbol{k}\right)$
\citep{Liu:2001a,Neal:1996}.  \Cref{alg:Hamiltonian-Monte-Carlo} summarizes a single iterate of the  Hamiltonian Monte Carlo
method.

\begin{algorithm}[h!]
\caption{Hamiltonian Monte Carlo\label{alg:Hamiltonian-Monte-Carlo}}

Given $\boldsymbol{\beta}^{*}$, $\epsilon$, $L$, where $L$ is
the number of Leapfrog updates.

Sample $\boldsymbol{r}^{*}\sim N\left(0,\bs M\right)$.

For $i=1$ to $L$

Set $\left(\boldsymbol{\beta},\boldsymbol{r}\right)\leftarrow {\rm Leapfrog}\left(\boldsymbol{\beta}^{*},\boldsymbol{r}^{*},\epsilon\right)$

end for

With probability $\alpha=\min\left(1,\frac{\exp\left(\mathcal{L\left(\boldsymbol{\beta}\right)}-\frac{1}{2}\boldsymbol{r}^{'}\bs M^{-1}\boldsymbol{r}\right)}{\exp\left(\mathcal{L\left(\boldsymbol{\beta}^{*}\right)}-\frac{1}{2}\boldsymbol{r}^{*'}\bs M^{-1}\boldsymbol{r}^{*}\right)}\right)$,
then set $\boldsymbol{\beta}^{*}=\boldsymbol{\beta}$, $\boldsymbol{r}^{*}=-\boldsymbol{r}$.
\end{algorithm}

The performance of HMC depends strongly on choosing suitable values
for $\bs M$, $\epsilon$, and $L$. We set $\bs M=\widehat{\bs \Sigma}^{-1}$,
where $\widehat{\bs \Sigma}$ is an estimate of the posterior covariance matrix after
some preliminary pilot runs of the HMC algorithm. The step size $\epsilon$
determines how well the leapfrog integration can approximate the Hamiltonian
dynamics. If we set $\epsilon$ too large, then the simulation error
is large yielding a low acceptance rate. However, if we set $\epsilon$
too small, then the computational burden is too high to obtain distant
proposals. Similalry, if we set $L$ too small, the proposal
will be close to the current value of the parameters, resulting in undesirable
random walk behaviour and slow mixing. If $L$ is too large, HMC will
generate trajectories that retrace their steps. Our article
uses the No-U-Turn sampler (NUTS) with the dual averaging algorithm developed
by \citet{Hoffman:2014} and \citet{Nesterov:2009}, respectively,
that still leaves the target density invariant and satisfies time
reversibility to adaptively select $L$ and $\epsilon$, respectively.

\Crefrange{sec:Sampling-Scheme-for bivariate probit}{sec:Sampling-Scheme-for mixed gumbel}
give the derivatives required by the Hamiltonian dynamics for the panel
data models given in \cref{sec:General-Panel-Data}.

\subsection{Comparing the performance of the  PMwG with some other approaches \label{sub:TNV}}

We now specialize the PMwG sampling scheme described in \Cref{alg: PMwG},
to the bivariate probit model with random effects to obtain~\Cref{alg: pmwg for biv probit} below;
see \Cref{sec:Sampling-Scheme-for bivariate probit} for
 more details.

Let $\boldsymbol{\theta}=\left(\boldsymbol{\beta}_{1},\boldsymbol{\beta}_{2},\rho,\bs \Sigma_{\alpha}\right)$
be the set of unknown parameters of interest. We use the following
prior distributions: $\rho\sim U\left(-1,1\right)$, $p\left(\bs \Sigma_{\alpha}^{-1}\right)\sim {\rm Wishart}\left(v_{0},\bs R_{0}\right)$,
where $v_{0}=6, \bs R_{0}=400I_{2}$, and  the prior distribution for the parameters of
the covariates is $N\left(0,100\bs I_{d}\right)$.
All the priors are uninformative.
\begin{algorithm}[H]\caption{PMwG sampling scheme for bivariate probit \label{alg: pmwg for biv probit}}
\begin{enumerate}
\item Generate $\bs \Sigma_{\alpha}|\boldsymbol{k}^{*},\boldsymbol{\alpha}_{1:P}^{\boldsymbol{k}^{*}},\boldsymbol{\theta}_{-\Sigma_{\alpha}}^{*},\boldsymbol{y}$
from a Wishart $W\left(v_{1},R_{1}\right)$ distribution, where $v_{1}=v_{0}+P$
and $\bs R_{1}=\left[\bs R_{0}^{-1}+\sum_{i=1}^{P}\boldsymbol{\alpha}_{i}\boldsymbol{\alpha}_{i}^{'}\right]^{-1}.$
\item Generate $\rho|\boldsymbol{k}^{*},\boldsymbol{\alpha}_{1:P}^{\boldsymbol{k}^{*}},\boldsymbol{\theta}_{-\rho}^{*},\boldsymbol{y}$
using the adaptive random walk proposal described below.
\item Generate $\left(\bs \beta_{1},\bs \beta_{2}\right)|\boldsymbol{k}^{*},\boldsymbol{\alpha}_{1:P}^{\boldsymbol{k}^{*}},\boldsymbol{\theta}_{-\left(\bs \beta_{1},\bs \beta_{2}\right)}^{*},\boldsymbol{y}$
using PMwG with Hamiltonian proposal described in the Section \ref{sub:Sampling--using Hamiltonian proposal}.
\item Sample from $\boldsymbol{\alpha}_{1:P}^{-k_{1}:-k_{P}}\sim\tilde{\pi}^{N}\left(\cdot |\boldsymbol{k}^{*},\boldsymbol{\alpha}_{1:P}^{\boldsymbol{k}^{*}},\boldsymbol{\theta}^{*}\right)$
using Conditional Monte Carlo method.
\item Sample $\left(k_{1},...,k_{P}\right)$ with probability given by $\Pr\left(k_{1}=l_{1},...,k_{P}=l_{P}|\boldsymbol{\theta},
    \boldsymbol{\alpha}_{1:P}^{-\boldsymbol{k}},\boldsymbol{\alpha}_{1:P}^{\boldsymbol{k}^{*}},\boldsymbol{y}\right)
    =\prod_{i=1}^{P}\overline{w}_{i}^{l_{i}}$.
\end{enumerate}
\end{algorithm}
In Step 2 we transform $\rho$ to $\rho_{\rm un} ={\rm tanh}^{-1} (\rho)$ so that $\rho_{\rm un} $  is unconstrained
and then use the adaptive random walk method of
 \citet{Garthwaite:2015}
that automatically scales univariate Gaussian random walk proposals
to ensure that the acceptance rate is around 0.3. The MH acceptance probability is
\[
1\land\frac{\tilde{\pi}^{N}\left(\rho|\boldsymbol{k}^{*},\boldsymbol{\alpha}_{1:P}^{\boldsymbol{k}^{*}},\boldsymbol{\theta}_{-\rho}^{*}\right)}{\tilde{\pi}^{N}\left(\rho^{*}|\boldsymbol{k}^{*},\boldsymbol{\alpha}_{1:P}^{\boldsymbol{k}^{*}},\boldsymbol{\theta}_{-\rho}^{*}\right)}\frac{\left|1-\left(\rho\right)^{2}\right|}{\left|1-\left(\rho^{*}\right)^{2}\right|}.
\]

We can alternatively replace steps 4 and 5, and sample the latent random effects using the Metropolis-Hastings
algorithm, which is denoted by MCMC-MH and is given as
\begin{itemize}
\item $4^{*})$ Sample $\boldsymbol{\alpha}_{i}\sim p\left(\boldsymbol{\alpha}_{i}|\boldsymbol{\theta}\right)=N\left(0,\bs \Sigma_{\alpha}\right)$
for $i=1,\dots, P.$
\item $5^{*})$ Accept $\boldsymbol{\alpha}_{i}$ with acceptance probability given
by
\[
\alpha\left(\boldsymbol{\alpha}_{i},\boldsymbol{\alpha}_{i}^{*}|\boldsymbol{\theta}\right)=1\land\frac{\prod_{t=1}^{T}p\left(y_{it}|\boldsymbol{\theta},\boldsymbol{\alpha}_{i}\right)}{\prod_{t=1}^{T}p\left(y_{it}|\boldsymbol{\theta},\boldsymbol{\alpha}_{i}^{*}\right)}.
\]

\end{itemize}
To improve the mixing of the MCMC-MH algorithm, we can
run steps $4^{*}$ and $5^{*}$ of MCMC-MH method above for a number of iterations, say $10$, $20$,
or $50$ iterations, for each individual random effect. \Cref{sec:Data-Augmentation-Bivariate probit} describes an alternative
Gibbs sampling scheme with data augmentation for
the bivariate probit with random effects model.

We conducted a simulation study to compare three different
approaches to estimation: PG, data augmentation, and MCMC-MH, using
the base case bivariate probit model with random effects as the data
generating process. To define our measure of the inefficiency of different
sampling schemes that takes computing time into account, we first
define the Integrated Autocorrelation Time $\left(IACT_{\boldsymbol{\theta}}\right)$.
For a univariate parameter $\theta$, the IACT is estimated by
\[
\widehat{IACT}\left(\theta_{1:M}\right)\coloneqq1+2\sum_{t=1}^{L}\widehat{\rho}_{t}\left(\theta_{1:M}\right),
\]
where $\widehat{\rho}_{t}\left(\theta_{1:M}\right)$ denotes the empirical
autocorrelation at lag $t$ of $\theta_{1:M}$ (after discarding the burnin period
iterates).

A low value of the IACT estimate suggests that the chain
mixes well. Here, $L$ is chosen as the first index for which the
empirical autocorrelation satisfies $\left|\widehat{\rho}_{t}\left(\theta_{1:M}\right)\right|<2/\sqrt{M}$,
i.e. when the empirical autocorrelation coefficient is statistically
insignificant. Our measure of inefficiency of the sampling scheme
is the time normalised variance
\[
TNV\coloneqq IACT_{mean}\times CT,
\]
where $CT$ is the computing time and $IACT_{mean}$ is the average  of the
IACT's over all parameters.

For this simulation study, we generated a number of datasets with $P=1000$ people
and $T=4$ time periods.
The covariates are generated as $x_{1,it},...,x_{10,it}\sim U\left(0,1\right)$,
and the parameters are set as
\begin{align*}\boldsymbol{\beta}_{1} &=\left(-1.5,0.1,-0.2,0.2,-0.2,0.1,-0.2,0.1,-0.1,-0.2,0.2\right)^{\prime}\\
\boldsymbol{\beta}_{2} &=\left(-2.5,0.1,0.2,-0.2,0.2,0.12,0.2,-0.2,0.12,-0.12,0.12\right)^{\prime}
\end{align*}
with $\tau_{1}^{2}=2.5$, $\tau_{2}^{2}=1$, and, $\rho_{\epsilon}=\rho_{\alpha}=0.5$.
In the simulation study, the total number of MCMC iterations was 11000,
with the first 1000 discarded as burnin. The number of
importance samples in the PMwG method was set as $100$.

\Cref{tab:Comparison-of-Different biv probit2-1}
summarises the estimation results and show that the PMwG sampler performs best. Tables \ref{tab:Comparison-of-Different biv probit1} and \ref{tab:Comparison-of-Different biv probit2} in Appendix \ref{tab:Comparison-of-Different bivariate probit} show the inefficiency factors (IACT) for each parameters in the bivariate probit model.
In terms of TNV, PMwG is more than twice as good as the data
augmentation approach and is also $7.51$, $8.35$, $14.22$, and
$25.40$ times better than MCMC-MH with $1$, $10$, $20$, and $50$
iterations, respectively. This gain is mostly due to the faster computing
time (CT) of the PG over MCMC-MH method. Note that with PG, the computation
of importance weights in the Conditional Monte Carlo to sample
each individual latent random effects can easily be paralellised.
On the other hand, the MCMC-MH approach is a sequential method that
may not easily be parallelised. The full Gibbs sampler with the data augmentation
approach may not be available for all of the models one might want
to consider. The high dimensional parameter vector $\bs \beta$ is sampled much
more efficiently using Hamiltonian proposals compared to the data
augmentation approach, which confirms the usefulness of Hamiltonian
proposals for such high dimensional parameters.

We also ran a second
simulation study for the mixed discrete-linear Gaussian regression.
\Cref{sec:Simulation-Mixed-Discrete} reports the results.

\begin{table}[H]
\caption{TNV comparison of different sampling schemes (PG, data augmentation,
MCMC-MH) for the bivariate probit regression simulation with random
effects with $P=1000$ and $T=4$ \label{tab:Comparison-of-Different biv probit2-1}}

\centering{}%
\begin{tabular}{ccccccc}
\hline
& PG & Data Aug. & MH1 & MH10 & MH20 & MH50\tabularnewline
\hline
Time & $0.62$ & $0.13$ & $0.48$ & $3.21$ &  $6.56$ & $15.13$\tabularnewline
$IACT_{mean}$ & $4.42$ & $44.23$ & $42.88$ & $7.13$ & $5.94$ & $4.60$\tabularnewline
TNV & $2.74$ & $5.75$ & $20.58$ & $22.89$ & $38.97$ & $69.60$\tabularnewline
Rel. TNV & $1$ & $2.09$ & $7.51$ & $8.35$ & $14.22$ & $25.40$\tabularnewline
\hline
\end{tabular}
\end{table}

\section{The Data and their characteristics}\label{S: data and characterization}

\subsection{Sample and Variable Definitions}\label{SS: sample and variable defs}

To estimate the impact of life-shock events on the two outcomes, alcohol
consumption, especially the propensity to binge drink, and the level
of mental health, we use data from Release 14 of the Household, Income,
and Labour Dynamics in Australia (HILDA) survey. HILDA is a nationally
representative longitudinal survey, which commenced in Australia in
2001, with a survey of 13969 persons in 7682 households, and  is
conducted annually. Each year, all household members aged 15 years
or older were interviewed and considered as part of the sample. Information
is collected on education, income, health, life satisfaction, family
formation, labour force dynamics, employment conditions, and other
economic and subjective well-being. In our paper, the analysis is at
the level of the individual,  where we include those who are aged 15
years or older who have non-missing information on our two outcome
variables: life shock variables, and other independent variables.
We use balanced samples. \footnote{The data used in this paper was extracted using the Add-On package
PanelWhiz for Stata. PanelWhiz (http://www.PanelWhiz.eu) was written
by Dr. John P. Haisken-DeNew (john@PanelWhiz.eu). See \citet{Hahn:2013}
and \citet{Haisken-DeNew:2010} for details. }

Following \citet{Frijters:2014}, the data on mental health status
used in this paper is generated from nine questions included in the
Short-Form General Health Survey (SF-36), which is available in all
the waves.

We construct a mental health score by taking the mean of the responses by the individual
 and then standardise so that the index has a mean zero
and standard deviation one. Lower scores indicate better mental
health status. \citet{Butterworth:2004} provide evidence that
the SF-36 data collected in the HILDA survey are valid and can be
used as a general measure of physical and mental health status. We
then categorise someone with good mental health if their score is below
0, and someone with poor mental health if their score is above 0.

The data on alcohol consumption used in this paper is generated from
two questions in the HILDA survey. Subjects are asked to respond
to the question: Do you drink alcohol? The second question
 we considered is related to the problem of binge drinking and is only available
in waves  7, 9, 11, and 13. Respondents identified as drinkers from
the first question are asked: how often do you have 5 or more (female)
or 7 or more (male) standard drinks on any one occasion?

Similarly to \citet{Srivastava:2010}, we define the composite binary variable
FREQUENT\_BINGE as FREQUENT\_BINGE=1 if a male respondent drinks excessive
alcohol more than 1 day per week or a female respondent drinks alcohol
more than 2 or 3 times a month, and is zero otherwise.

The life-shock indicators are generated from responses in a section
of HILDA's self-completion questionaire. Respondents are told \lq We
now would like you to think about major events that have happened
in your life over the past 12 months\rq{} and are asked whether any of
the following apply to them: (1) Separated from spouse or long-term
partner, (2) Serious personal injury or illness to self, (3) Death
of spouse/child, (4) Got back together with spouse or long-term partner
after a separation, (5) Death of a close friend, (6) Victim of property
crime (e.g. theft, housebreaking), (7) Got married, (8) Promoted at
work, (9) Major improvement in financial situation, (10) Major worsening
in financial situation, (11) Changed residence, (12) Partner or I
gave birth to a child.

\subsection*{Other Variables}

Other control variables included are marital status (married, single/widow/divorce)
and the highest educational qualification attained (degree, diploma/certificate,
high school and no qual). single/widow/divorce is the excluded category
for marital status. Similarly, high school and no qual (no academic
qualification) is excluded for the educational variable. We also include
the total number of children below age 18 living in the household,
age, and the logarithm of annualised household income. The Mundlak correction contains $\overline{age}$, $\overline{age}^{2}$,
$\overline{\log\left(income\right)}$, $\overline{num.child}$.

\section{Results and Discussion }\label{S: results and discussion}

\subsection{Estimation Results}\label{SS: estimation results}

The PMwG sampling scheme was used to estimate the various models
defined in \cref{sec:General-Panel-Data}. For
each panel data model, we used 11000 MCMC samples of which the first 1000
were discarded as burnin. After convergence, $M=10,000$ iterates
$\left\{ \boldsymbol{\theta}^{\left(m\right)}\right\} $ were collected
from which we estimated the posterior means of the parameters as well
as their 95\% credible intervals. The Bayesian methodology provides information
on the entire posterior distribution not only for the parameters of
the models, but also other parameters of interest especially the partial
effects. We say that the variable of interest is significant if its
95\% posterior probability intervals does not cover zero.

Models for men and women were analysed separately throughout.
\Cref{tab:Estimation-Results-for Male dependence parameters,tab:Estimation-Results-for feMale dependence parameters}
in \cref{sec:Empirical-Results} show the estimates for various
model specifications for the dependence parameters. The Clayton and Gumbel
copula specifications are used for the contemporaneous error terms. The
overall pattern of dependence is similar for males and females. The
dependence in the individual effects and the error terms of the two outcomes
are weak for both male and female models as measured by the Kendall
tau, which we denote as $\kappa_{\tau}$. The lower tail dependence based on the Clayton copula
is very close to zero for both males and females. This indicates that
there is little relationship between the unobservables who are in very good
mental health and no excessive alcohol consumption in our data. Furthermore,
the upper tail dependence based on the Gumbel copula is also very
close to zero for both males and females. This also suggests that
there is a weak relationship between having very poor mental health
and excessive alcohol consumption after conditioning on the covariates.
The estimates of the dependence parameters from the bivariate probit
model are similar to those from the Gaussian model specification and
are consistent with the expected positive correlation although only
one of the correlations is significant. The estimate of $\tau_{2}^{2}$ is much bigger for the bivariate probit
model than for the Gaussian model because some information is lost in going to binary variables from continuous variables.
Tables \ref{tab:Estimation-Results-for male binge } to \ref{tab:Estimation-Results-for female mental health}
in
\cref{sec:Empirical-Results} give the estimates of the parameters
of the main covariates. We do not report the estimates associated
with the Mundlak corrections for conciseness. The patterns of the estimates for the covariates in the binge drinking equation $(y_1)$ from the bivariate probit model and the Gaussian model specification
are relatively similar across males and females. However, the estimates of the covariates in the mental health equation $(y_2)$ are slightly different and the Gaussian model has tighter posterior probability intervals. All the copula models gave similar results. Furthermore, it
can be seen that all the parameters $\left(\boldsymbol{\beta}_{1},\boldsymbol{\beta}_{2}\right)$
are estimated efficiently for all the models. For males,
the mean IACT for $\boldsymbol{\beta}_{1}$ is $2.55$, $2.55$, $2.59$,
and $2.60$ and the mean IACT for $\boldsymbol{\beta}_{2}$ is $1.41$,
$2.43$, $1.30$, and $1.51$ for the mixed Gaussian, bivariate probit,
mixed Clayton, and mixed Gumbel, respectively. For females, the mean
IACT for $\boldsymbol{\beta}_{1}$ is $2.35$, $2.41$, $2.41$, and
$2.27$ and the mean IACT for $\boldsymbol{\beta}_{2}$ is $1.58$,
$2.09$, $1.41$, and $1.18$ for mixed Gaussian, bivariate probit,
mixed Clayton, and mixed Gumbel, respectively. This confirms the usefulness
of Hamiltonian Monte Carlo proposals for a high dimensional parameter
$\boldsymbol{\beta}$.

Our primary interest is in the impact of the shocks on the joint outcomes
of binge drinking and poor mental health. For these we compute average
partial effects as described in the next section.

\subsection{Average Partial Effects}\label{SS: average partial effects}
We use an Average Partial Effect (APE) to study how a life event such
\lq victim of a crime\rq{} impacts on the association between the
joint outcomes of binge drinking and low mental health. Let $A_{it}$ denote the
event that person $i$ at time $t$ both binge drinks and has poor mental health.

We define the APE for a particular life event LE as
\begin{align} \label{eq: ape def}
\rm {APE}_{ \rm LE}&:= \frac{1}{PT}\sum_{i=1}^P\sum_{t=1}^T \int \left [
\Pr\left (A_{it}|  \bs x_{it}^{(1)}, \ov {\bs x_i} , \bs \theta, \bs y \right )
-  \Pr\left (A_{it}|  \bs x_{it}^{(0)}, \ov {\bs x_i} , \bs \theta, \bs y \right )
\right ] \pi(\bs \theta) { \rm d } \bs \theta,
\end{align}
where $\pi(\bs \theta)$ is the posterior density of $\bs \theta$ and
the superscript $(1)$ in $\bs x_{it}^{(1)}$ means that the life event of interest is set to 1, with a similar
interpretation for $\bs x_{it}^{(0)}$. That is, $\rm {APE}_ { \rm LE}$ is the average over all people and time
periods of the posterior probability of both binge drinking and poor mental health given the data.

Due to the similarity of the results across copula models, we only present the results for the Gaussian copula and
the bivariate probit models.
Given the draws $\{ \bs \theta ^{(m)}, m=1, \dots, M\}$ from the posterior $\bs \theta$, the estimate of the
${\rm APE}_{\rm LE}$ for the bivariate probit model is
\begin{align*}
\widehat{APE}_{ \rm LE} & =  \frac{1}{M}\sum_{m=1}^{M}\frac{1}{PT}\sum_{i=1}^{P}\sum_{t=1}^{T}
\left [
\Phi_{2}\left(\left (\bs \zeta_{it}^{(1)}\right )^{(m)} ; \bs \Sigma^{(m)} + \bs \Sigma_\alpha ^{(m)} \right )
- \Phi_{2}\left(\left (\bs \zeta_{it}^{(0)}\right )^{(m)} ; \bs \Sigma^{(m)} + \bs \Sigma_\alpha ^{(m)} \right )
\right ],
\intertext{where}
\bs \zeta_{it} & =\boldsymbol{\zeta}_{it}(\bs x_{it}, \ov{ \bs x}_i, \bs \theta):=\left(\boldsymbol{x}_{1,it}^{T}\boldsymbol{\beta}_{1,1}+
\boldsymbol{\overline{x}}_{1,i}\boldsymbol{\beta}_{1,2},\boldsymbol{x}_{2,it}^{T}\boldsymbol{\beta}_{2,1}+
\boldsymbol{\overline{x}}_{2,i}\boldsymbol{\beta}_{2,2}\right)^{T}, \left (\bs \zeta_{it}^{(1)}\right )^{(m)} = \bs \zeta_{it}(\bs x_{it}^{(1)}, \bs {\ov x_i}, \bs \theta^{(m)},
\end{align*}
 with
$\left (\bs \zeta_{it}^{(0)}\right )^{(m)}$ defined similarly, $\Sigma^{(m)} = \bs \Sigma ( \bs \theta^{(m)}    ) $
and $\bs \Sigma_\alpha^{(m)} = \bs \Sigma_\alpha  ( \bs \theta^{(m)}    ) $.

\Cref{tab:Average-Partial-Effects for male,tab:Average-Partial-Effects for female}
summarize the estimates of the APEs for major life events shocks for
the probability of binge drinking and low mental health score for
both males and females for the bivariate probit and Gaussian copula models.
In this case, the sign of the APE provides a clear qualitative interpretation,
with a significant positive sign implying a significant increase in
the joint probability of binge drinking and having low mental health
score, and vice versa. All the IACTs of the APEs, for both males and females,
for all life events are very close to $1$, showing that the APEs are
estimated efficiently.

Although these APE effects are small, it is necessary to compare
them with the unconditional joint probability of binge drinking and poor
mental health which are also small for both males and females. For
example, the effect of a personal injury for males is a bit over 2
percentage points for both models but when expressed as a percentage
of the unconditional probability it is 37\% according to the probit estimates
and 38\% for the Gaussian estimates. The death of a spouse/child also
has large relative effects for males but these are not significant.
In fact, the only significant APE for the joint probability is for
personal injury despite several of the estimates in the marginal models
reported in  \cref{tab:Estimation-Results-for male binge ,tab:Estimation-Results-for male mental health} being significant.

In general, the results across the  probit and Gaussian models
are very similar for both males and females. However, unlike the results
for males, several shocks have large and significant effects for females.
Being separated from their spouse, changing residence, a worsening financial
situation, and having a promotion at work are the shocks that are
all significant for females. Each of these increases the joint probability
of binge drinking and poor mental health and, using the Gaussian results,
have impacts relative to the unconditional probability ranging from
16\% for ,a change in residence to 58\% for a worsening in the financial position.
Finally, the \lq gave birth\rq{}  shock has a significant negative association
which reduces the joint probability of binge drinking and poor mental
health.

\begin{table}[H]
\caption{Average Partial Effects on the probability of binge drinking and mental
health for major life events variables for male. Symbol $^{*}$ denotes statistical significance \label{tab:Average-Partial-Effects for male}}

\centering{}%
\begin{tabular}{ccccc}
\hline
Variables & Probit & IACT & Gaussian & IACT\tabularnewline
\hline
\hline
gave birth & $\underset{\left(-0.02,0.00\right)}{-0.01}$ & 1.17 & $\underset{\left(-0.03,0.00\right)}{-0.01}$ & 1.22\tabularnewline
death of a friend & $\underset{\left(-0.00,0.01\right)}{0.00}$ & 1.28 & $\underset{\left(-0.01,0.01\right)}{0.00}$ & 1.38\tabularnewline
death of a spouse/child & $\underset{\left(-0.01,0.08\right)}{0.03}$ & 1.05 & $\underset{\left(-0.02,0.08\right)}{0.02}$ & 1.12\tabularnewline
personal injury & $\underset{\left(0.01,0.04\right)^{*}}{0.02}$ & 1.19 & $\underset{\left(0.01,0.04\right)^{*}}{0.02}$ & 1.45\tabularnewline
getting married & $\underset{\left(-0.02,0.01\right)}{-0.01}$ & 1.15 & $\underset{\left(-0.03,0.01\right)}{-0.01}$ & 1.14\tabularnewline
changed residence & $\underset{\left(-0.01,0.01\right)}{-0.00}$ & 1.24 & $\underset{\left(-0.01,0.01\right)}{0.00}$ & 1.22\tabularnewline
victim of crime & $\underset{\left(-0.01,0.02\right)}{0.00}$ & 1.09 & $\underset{\left(-0.01,0.02\right)}{0.01}$ & 1.16\tabularnewline
promoted at work & $\underset{\left(-0.01,0.02\right)}{0.00}$ & 1.29 & $\underset{\left(-0.01,0.01\right)}{0.00}$ & 1.26\tabularnewline
back with spouse & $\underset{\left(-0.02,0.04\right)}{0.00}$ & 1.26 & $\underset{\left(-0.03,0.03\right)}{-0.00}$ & 1.38\tabularnewline
separated from spouse & $\underset{\left(-0.00,0.03\right)}{0.01}$ & 1.14 & $\underset{\left(-0.00,0.03\right)}{0.01}$ & 1.22\tabularnewline
improvement in financial & $\underset{\left(-0.01,0.01\right)}{-0.00}$ & 1.04 & $\underset{\left(-0.02,0.02\right)}{-0.00}$ & 1.10\tabularnewline
worsening in financial & $\underset{\left(-0.01,0.02\right)}{0.01}$ & 1.26 & $\underset{\left(-0.01,0.03\right)}{0.01}$ & 1.29 \tabularnewline
\hline
Unconditional Prob. & 0.065 &  &  & \tabularnewline
\hline
\end{tabular}
\end{table}

\begin{table}[H]
\caption{Average Partial Effects for the probability of binge drinking and
mental health for major life events variables for Female. Symbol $^{*}$ denotes statistical significance. \label{tab:Average-Partial-Effects for female}}

\centering{}%
\begin{tabular}{ccccc}
\hline
Variables & Probit & IACT & Gaussian & IACT\tabularnewline
\hline
\hline
gave birth & $\underset{\left(-0.03,-0.01\right)}{-0.02}$ & 1.60 & $\underset{\left(-0.04,-0.01\right)}{-0.03}$ & 1.56\tabularnewline
death of a friend & $\underset{\left(-0.00,0.02\right)}{0.01}$ & 1.53 & $\underset{\left(0.00,0.02\right)}{0.01}$ & 1.66\tabularnewline
death of a spouse/child & $\underset{\left(-0.02,0.05\right)}{0.01}$ & 1.32 & $\underset{\left(-0.03,0.04\right)}{0.01}$ & 1.51\tabularnewline
personal injury & $\underset{\left(-0.00,0.02\right)}{0.01}$ & 1.49  & $\underset{\left(-0.00,0.02\right)}{0.01}$ & 1.59\tabularnewline
getting married & $\underset{\left(-0.01,0.02\right)}{0.00}$ & 1.36 & $\underset{\left(-0.01,0.03\right)}{0.01}$ & 1.55\tabularnewline
changed residence & $\underset{\left(0.00,0.01\right)^{*}}{0.01}$ & 1.47 & $\underset{\left(0.00,0.02\right)^{*}}{0.01}$ & 1.75\tabularnewline
victim of crime & $\underset{\left(-0.01,0.01\right)}{-0.00}$ & 1.46 & $\underset{\left(-0.02,0.01\right)}{-0.00}$ & 1.56\tabularnewline
promoted at work & $\underset{\left(0.01,0.03\right)^{*}}{0.02}$ & 1.45 & $\underset{\left(0.01,0.03\right)^{*}}{0.02}$ & 1.59\tabularnewline
back with spouse & $\underset{\left(-0.02,0.03\right)}{-0.00}$ & 1.42 & $\underset{\left(-0.03,0.03\right)}{-0.00}$ & 1.64\tabularnewline
separated from spouse & $\underset{\left(0.01,0.04\right)^{*}}{0.03}$ & 1.36 & $\underset{\left(0.01,0.04\right)^{*}}{0.03}$ & 1.60\tabularnewline
improvement in financial & $\underset{\left(-0.01,0.02\right)}{0.01}$ & 1.49 & $\underset{\left(-0.01,0.02\right)}{0.01}$ & 1.44\tabularnewline
worsening in financial & $\underset{\left(0.01,0.05\right)^{*}}{0.03}$ & 1.61 & $\underset{\left(0.01,0.06\right)^{*}}{0.03}$ & 1.74\tabularnewline
\hline
Unconditional Prob. & 0.058 &  &  & \tabularnewline
\hline
\end{tabular}
\end{table}

\section{Conclusions}\label{S: conclusions}

Based on recent
advances in Particle Markov chain Monte Carlo (PMCMC), we demonstrate an
approach to estimating flexible model specifications
for multivariate outcomes using panel data. We propose a particle Metropolis within Gibbs (PMwG)  sampling scheme
for Bayesian inference of a flexible
for multivariate outcomes using panel data and show that this sampler is more efficient than competing methods.

The panel data methods we develop in this paper also accommodate
a mix of discrete and continuous outcomes and in doing so avoid the
common approach of reducing all outcomes to binary variables so that
a multivariate probit approach is possible. We demonstrate in our application that joint modelling of alcohol consumption and mental
health often gave only slightly different results after discretising
the outcomes. But given that more general specifications better reflect
the discrete outcomes of alcohol consumption and the continuous mental
health measure, there is an argument that the bivariate probit model
is potentially masking important features of the relationship.

The results in the application are somewhat surprising. Specifying
and comparing different copulas was motivated by the belief that the
dependence structure between excessive alcohol consumption and poor
mental health might potentially be very different in the tails of
the distributions. The results indicate that this is not the case
in that all three copulas provide qualitatively similar results. Moreover,
they indicate that the relationship between alcohol consumption and
mental health is weak which is a key reason why differences did not
emerge across different copulas. While we have not estimated a formal
model allowing two-way causality between excessive alcohol consumption
and mental health, if such effects exist then we would expect them
to manifest themselves in a positive relationship in our joint estimation.
Not finding such a relationship is possibly evidence that the causal
effects running both ways between excessive alcohol consumption and
mental health are indeed weak or even non-existent. This is not inconsistent
with the existing literature where evidence is mixed; see for example
\citet{Boden:2011}. Another possibility is that there are causal
effects but they relate to particular subgroups of the population
and our models are insufficiently rich to capture the heterogeneity
in these effects. We have conducted all analyses for males and females
separately and found some differences across these groups but it may
be that other sources of heterogeneity may be associated with unobservable
rather than observable individual features. We leave this interesting line of work for future research.

\bibliographystyle{Chicago}
\bibliography{references_v1}

\begin{thebibliography}{}

\bibitem[\protect\citeauthoryear{Albert and Chib}{Albert and
  Chib}{1993}]{Albert1993}
Albert, J.~H. and S.~Chib (1993).
\newblock Bayesian analysis of binary and polychotomous response data.
\newblock {\em Journal of American Statistical Association\/}~{\em 88\/}(422),
  669--679.

\bibitem[\protect\citeauthoryear{Andrieu, Doucet, and Holenstein}{Andrieu
  et~al.}{2010}]{Andrieu:2010}
Andrieu, C., A.~Doucet, and R.~Holenstein (2010).
\newblock Particle {Markov chain Monte Carlo} methods.
\newblock {\em Journal of the Royal Statistical Society, Series B\/}~{\em 72},
  1--33.

\bibitem[\protect\citeauthoryear{Andrieu and Roberts}{Andrieu and
  Roberts}{2009}]{Andrieu:2009}
Andrieu, C. and G.~Roberts (2009).
\newblock The pseudo-marginal approach for efficient {Monte Carlo}
  computations.
\newblock {\em The Annals of Statistics\/}~{\em 37}, 697--725.

\bibitem[\protect\citeauthoryear{Atella, Brindisi, Deb, and Rosati}{Atella
  et~al.}{2004}]{Atella:2004}
Atella, V., F.~Brindisi, P.~Deb, and F.~C. Rosati (2004).
\newblock Determinant of access to physician services in italy: A latent class
  seemingly unrelated probit approach.
\newblock {\em Health Economics\/}~{\em 13}, 657--668.

\bibitem[\protect\citeauthoryear{Boden and Ferguson}{Boden and
  Ferguson}{2011}]{Boden:2011}
Boden, J.~M. and D.~M. Ferguson (2011).
\newblock Alcohol and depression.
\newblock {\em Addiction\/}~{\em 106}, 906--914.

\bibitem[\protect\citeauthoryear{Buchmueller, Fiebig, Jones, and
  Savage}{Buchmueller et~al.}{2013}]{Buchmueller:2013}
Buchmueller, T.~C., D.~G. Fiebig, G.~Jones, and E.~Savage (2013).
\newblock Preference heterogeneity and selection in private health insurance:
  The case of {A}ustralia.
\newblock {\em Journal of Health Economics\/}~{\em 32}, 757--767.

\bibitem[\protect\citeauthoryear{Buddelmeyer and Powdthavee}{Buddelmeyer and
  Powdthavee}{2016}]{Buddelmeyer:2016}
Buddelmeyer, H. and N.~Powdthavee (2016).
\newblock Can locus of control insure against negative shocks? psychological
  evidence from panel data.
\newblock {\em Journal of Economic Behavior and Organization\/}~{\em 122},
  88--109.

\bibitem[\protect\citeauthoryear{Butterworth and Crosier}{Butterworth and
  Crosier}{2004}]{Butterworth:2004}
Butterworth, P. and T.~Crosier (2004).
\newblock The validity of {SF}-36 in an {A}ustralian national household survey:
  Demonstrating the applicability of the {H}ousehold {I}ncome and {L}abour
  {D}ynamics in australia {(HILDA)} survey to examination of health
  inequalities.
\newblock {\em BMC Public Health\/}~{\em 4\/}(44), 1--11.

\bibitem[\protect\citeauthoryear{Contoyannis and Jones}{Contoyannis and
  Jones}{2004}]{Contoyannis:2004}
Contoyannis, P. and A.~M. Jones (2004).
\newblock The dynamics of health in {B}ritish household panel survey.
\newblock {\em Journal of Health Economics\/}~{\em 23}, 965--995.

\bibitem[\protect\citeauthoryear{Contoyannis, Jones, and Rice}{Contoyannis
  et~al.}{2004}]{Contoyannis:2004b}
Contoyannis, P., A.~M. Jones, and N.~Rice (2004).
\newblock The dynamics of health in the {B}ritish household panel survey.
\newblock {\em Journal of Applied Econometrics\/}.

\bibitem[\protect\citeauthoryear{Frijters, Johnston, and Shields}{Frijters
  et~al.}{2011}]{Frijters:2011}
Frijters, P., D.~W. Johnston, and M.~A. Shields (2011).
\newblock Life satisfaction dynamics with quarterly life event data.
\newblock {\em The Scandinavian Journal of Economics\/}~{\em 113\/}(1),
  190--211.

\bibitem[\protect\citeauthoryear{Frijters, Johnston, and Shields}{Frijters
  et~al.}{2014}]{Frijters:2014}
Frijters, P., D.~W. Johnston, and M.~A. Shields (2014).
\newblock The effect of mental health on employment: Evidence from {A}ustralian
  panel data.
\newblock {\em Health Economics\/}~{\em 23}, 1058--1071.

\bibitem[\protect\citeauthoryear{Garthwaite, Fan, and Sisson}{Garthwaite
  et~al.}{2015}]{Garthwaite:2015}
Garthwaite, P.~H., Y.~Fan, and S.~A. Sisson (2015).
\newblock Adaptive optimal scaling of {M}etropolis-{H}astings algorithms using
  the {R}obbins-{M}onro process.
\newblock {\em Communications in Statistics - Theory and Methods\/}.

\bibitem[\protect\citeauthoryear{Hahn and Haisken-DeNew}{Hahn and
  Haisken-DeNew}{2013}]{Hahn:2013}
Hahn, M.~H. and J.~Haisken-DeNew (2013).
\newblock Panelwhiz and the {A}ustralian longitudinal data infrastructure in
  economics.
\newblock {\em Australian Economic Review\/}~{\em 46\/}(3), 1--8.

\bibitem[\protect\citeauthoryear{Haisken-DeNew and Hahn}{Haisken-DeNew and
  Hahn}{2010}]{Haisken-DeNew:2010}
Haisken-DeNew, J. and M.~H. Hahn (2010).
\newblock Panelwhiz: Efficient data extraction of complex panel data sets - an
  example using the {G}erman {S}{O}{E}{P}.
\newblock {\em Journal of Applied Social Science Studies\/}~{\em 130\/}(4),
  643--654.

\bibitem[\protect\citeauthoryear{Hoffman and Gelman}{Hoffman and
  Gelman}{2014}]{Hoffman:2014}
Hoffman, M.~D. and A.~Gelman (2014).
\newblock The {N}o-{U}-{T}urn sampler: adaptively setting path llength in
  {H}amiltonian {M}onte {C}arlo.
\newblock {\em Journal of Machine Learning Research\/}~{\em 15}, 1593--1623.

\bibitem[\protect\citeauthoryear{Joe}{Joe}{2015}]{Joe:2015}
Joe, H. (2015).
\newblock {\em Dependence modeling with copulas}, Volume 134 of {\em Monographs
  on statistics and applied probability}.
\newblock London, Chapman \& Hall.

\bibitem[\protect\citeauthoryear{Kleinberg, Ludwig, Mullainathan, and
  Obermeyer}{Kleinberg et~al.}{2015}]{Kleinberg:2015}
Kleinberg, J., J.~Ludwig, S.~Mullainathan, and Z.~Obermeyer (2015).
\newblock Policy prediction problems.
\newblock {\em American Economic Review: Papers and Proceedings\/}~{\em
  105\/}(5), 491--495.

\bibitem[\protect\citeauthoryear{Lindeboom, Portrait, and van~den
  Berg}{Lindeboom et~al.}{2002}]{Lindeboom:2002}
Lindeboom, M., F.~Portrait, and G.~van~den Berg (2002).
\newblock An econometric analysis of mental health effects of major event in
  the life of older individuals.
\newblock {\em Health Economics\/}~{\em 11}, 505--520.

\bibitem[\protect\citeauthoryear{Liu}{Liu}{2001}]{Liu:2001a}
Liu, J.~S. (2001).
\newblock {\em {M}onte {C}arlo strategies in scientific computing}.
\newblock New York: Springer.

\bibitem[\protect\citeauthoryear{Mentzakis, Roberts, Suhrcke, and
  Mckee}{Mentzakis et~al.}{2015}]{Mentzakis:2015}
Mentzakis, E., B.~Roberts, M.~Suhrcke, and M.~Mckee (2015).
\newblock Psychological distress and problem drinking.
\newblock {\em Health Economics\/}.

\bibitem[\protect\citeauthoryear{Mullahy}{Mullahy}{2017}]{Mullahy2016}
Mullahy, J. (2017).
\newblock Marginal effects in multivariate probit models.
\newblock {\em Empirical Economics.\/}~{\em 52\/}(2), 447--461.

\bibitem[\protect\citeauthoryear{Mundlak}{Mundlak}{1978}]{Mundlak:1978}
Mundlak, Y. (1978).
\newblock On the pooling of time series and cross-section data.
\newblock {\em Econometrica\/}~{\em 46}, 69--85.

\bibitem[\protect\citeauthoryear{Neal}{Neal}{2011}]{Neal:2011}
Neal, R. (2011).
\newblock {\em {MCMC} using {H}amiltonian dynamics}.
\newblock Handbook of {M}arkov chain {M}onte {C}arlo. Chapman \& {H}all.

\bibitem[\protect\citeauthoryear{Neal}{Neal}{1996}]{Neal:1996}
Neal, R.~M. (1996).
\newblock {\em Bayesian Learning for Neural Networks}.
\newblock Springer, Lecture Notes in Statistics, New York.

\bibitem[\protect\citeauthoryear{Nemeth, Fearnhead, and Mihaylova}{Nemeth
  et~al.}{2016}]{Nemeth:2016}
Nemeth, C., P.~Fearnhead, and L.~S. Mihaylova (2016).
\newblock Particle approximations of the score and observed information matrix
  for parameter estimation in state space models with linear computational
  cost.
\newblock {\em Journal of Computational and Graphical Statistics\/}~{\em
  25\/}(4), 1138--1157.

\bibitem[\protect\citeauthoryear{Nesterov}{Nesterov}{2009}]{Nesterov:2009}
Nesterov, Y. (2009).
\newblock Primal-dual subgradient methods for convex problems.
\newblock {\em Mathematical programming\/}~{\em 120\/}(1), 221--259.

\bibitem[\protect\citeauthoryear{Pitt, Chan, and Kohn}{Pitt
  et~al.}{2006}]{Pitt2006}
Pitt, M., D.~Chan, and R.~Kohn (2006).
\newblock Efficient {B}ayesian inference for {G}aussian copula regression
  models.
\newblock {\em Biometrika\/}~{\em 93\/}(3), 537--554.

\bibitem[\protect\citeauthoryear{Sherlock, Thiery, Roberts, and
  Rosenthal}{Sherlock et~al.}{2015}]{Sherlock:2015}
Sherlock, C., A.~Thiery, G.~Roberts, and J.~Rosenthal (2015).
\newblock On the efficiency of pseudo-marginal random walk {M}etropolis
  algorithms.
\newblock {\em Annals of Statistics\/}~{\em 43\/}(1), 238--275.

\bibitem[\protect\citeauthoryear{Smith and Khaled}{Smith and
  Khaled}{2012}]{Smith2012}
Smith, M. and M.~A. Khaled (2012).
\newblock Estimation of copula models with discrete margins via bayesian data
  augmentation.
\newblock {\em Journal of American Statistical Association\/}~{\em 107\/}(497),
  290--303.

\bibitem[\protect\citeauthoryear{Srivastava and Zhao}{Srivastava and
  Zhao}{2010}]{Srivastava:2010}
Srivastava, P. and X.~Zhao (2010).
\newblock What do the bingers drink? micro-unit evidence on negative
  externalities and drinker characteristics of alcohol consumption by beverage
  types.
\newblock {\em Economic Papers\/}~{\em 29\/}(2), 229--250.

\bibitem[\protect\citeauthoryear{Tierney}{Tierney}{1994}]{tierney1994}
Tierney, L. (1994, 12).
\newblock Markov chains for exploring posterior distributions.
\newblock {\em Ann. Statist.\/}~{\em 22\/}(4), 1701--1728.

\bibitem[\protect\citeauthoryear{Trivedi and Zimmer}{Trivedi and
  Zimmer}{2005}]{Trivedi2005}
Trivedi, P. and D.~Zimmer (2005).
\newblock Copula modeling: An introduction to practitioners.
\newblock {\em Foundation and Trends in Econometrics\/}.

\bibitem[\protect\citeauthoryear{Woolridge}{Woolridge}{2010}]{Woolridge:2010}
Woolridge, J. (2010).
\newblock {\em Econometric Analysis of Cross Section and Panel Data (2nd
  edition)}.
\newblock Cambridge, {MA}: {MIT} {P}ress.

\bibitem[\protect\citeauthoryear{Woolridge}{Woolridge}{2005}]{Woolridge:2005}
Woolridge, J.~M. (2005).
\newblock Simple solutions to the initial conditions problem in dynamic
  nonlinear panel data models with unobserved heterogeneity.
\newblock {\em Journal of Applied Econometrics\/}~{\em 20}, 39--54.

\end{thebibliography}

\section*{Appendices}
\begin{appendix}

\section{Proofs of Results}  \label{sec:lemma1}

The first lemma shows that the estimate $\widehat{p}_{N}(\bs y |\bs \theta)$ given in \cref{eq:estimated likelihood} is an
unbiased estimate of the likelihood $p(\bs y |\bs \theta)$.
\begin{lemma}
\label{lemma1}
$E \left\{ \widehat{p}_{N}(\bs y |\bs \theta) \right\}   = p(\bs y |\bs \theta)$.
\begin{proof}
From \cref{ass: use of IS} and Steps 1 and 2 of \cref{alg: IS sampling alg},
$E \left( w_{i}^{j} \right)  = p(\bs y_i |\bs \theta)$, and
hence the result follows from equations \eqref{eq:independence likelihood}), \eqref{eq:particle dist}) and
\eqref{eq:estimated likelihood}).
\end{proof}
\end{lemma}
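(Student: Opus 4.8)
The plan is to reduce the claim to the elementary unbiasedness of importance sampling and then to lift it to the product estimator using independence. First I would treat a single importance weight: by Steps~(1) and~(2) of \cref{alg: IS sampling alg}, $\bs\alpha_i^j$ is drawn from $m_i(\cdot|\bs\theta,\bs y_i)$, so
\[
E\left(w_i^j\right)=\int \frac{p(\bs y_i|\bs\alpha_i,\bs\theta)\,p(\bs\alpha_i|\bs\theta)}{m_i(\bs\alpha_i|\bs\theta,\bs y_i)}\,m_i(\bs\alpha_i|\bs\theta,\bs y_i)\,{\rm d}\bs\alpha_i
=\int p(\bs y_i|\bs\alpha_i,\bs\theta)\,p(\bs\alpha_i|\bs\theta)\,{\rm d}\bs\alpha_i = p(\bs y_i|\bs\theta),
\]
where the cancellation of $m_i$ is legitimate precisely because Assumption~\ref{ass: use of IS} guarantees $S_i^{\bs\theta}\subseteq Q_i^{\bs\theta}$, so that the region $\{m_i=0\}$ carries no mass of $\pi(\cdot|\bs\theta)$ and hence nothing the importance sampler could miss.

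Second, I would average over the $N$ particles: since the identity $E(w_i^j)=p(\bs y_i|\bs\theta)$ holds for each $j=1,\dots,N$, linearity gives $E\{N^{-1}\sum_{j=1}^N w_i^j\}=p(\bs y_i|\bs\theta)$ for every $i$. Third, I would invoke the product structure: under the joint sampling law $\psi^\theta$ of \eqref{eq:particle dist}, the blocks $\bs\alpha_1^{1:N},\dots,\bs\alpha_P^{1:N}$ are mutually independent, so the per-individual averages appearing in \eqref{eq:estimated likelihood} are independent, and therefore
\[
E\left\{\wh p_N(\bs y|\bs\theta)\right\}=\prod_{i=1}^P E\left\{\frac1N\sum_{j=1}^N w_i^j\right\}=\prod_{i=1}^P p(\bs y_i|\bs\theta)=p(\bs y|\bs\theta),
\]
the last equality being the independence property \eqref{eq:independence likelihood}.

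I do not expect a genuine obstacle: the result is the standard unbiasedness of a normalized sum of importance weights. The only step needing care is the first one --- the validity of the importance-sampling change of measure --- and that is exactly what Assumption~\ref{ass: use of IS} is there to supply; without the support condition the displayed cancellation would be unjustified. A minor bookkeeping remark is that for fixed $i$ the weights $w_i^1,\dots,w_i^N$ are i.i.d.\ under $\psi^\theta$, so no covariance terms enter the averaging step, and the whole argument is a two-line calculation once \cref{ass: use of IS} and the factorizations \eqref{eq:independence likelihood} and \eqref{eq:particle dist} are in hand.
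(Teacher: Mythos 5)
Your proof is correct and follows essentially the same route as the paper's: unbiasedness of each weight $w_i^j$ via the importance-sampling identity (justified by Assumption~\ref{ass: use of IS}), then linearity over $j$ and independence across $i$ under $\psi^\theta$ to factor the expectation of the product, finishing with \eqref{eq:independence likelihood}. You have simply written out the details that the paper's one-line proof leaves implicit.
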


\begin{lemma}
\label{lemma2}
The marginal distribution of $\widetilde{\pi}_{N}\left(\boldsymbol{k},\boldsymbol{\alpha}_{1:P}^{\boldsymbol{k}},\boldsymbol{\theta}\right)$
is given by
\[
\widetilde{\pi}_{N}\left(\boldsymbol{k},\boldsymbol{\alpha}_{1:P}^{\boldsymbol{k}},\boldsymbol{\theta}\right)=\frac{\pi\left(\boldsymbol{\theta},\boldsymbol{\alpha}_{1:P}^{\boldsymbol{k}}\right)}{N^{P}}.
\]
\begin{proof}
We integrate the target density $\widetilde{\pi}_{N}\left(\boldsymbol{k},\boldsymbol{\alpha}_{1:P}^{1:N},\boldsymbol{\theta}\right)$
over $\boldsymbol{\alpha}_{1:P}^{\left(-\boldsymbol{k}\right)}$
\begin{eqnarray*}
\widetilde{\pi}_{N}\left(\boldsymbol{k},\boldsymbol{\alpha}_{1:P}^{\boldsymbol{k}},\boldsymbol{\theta}\right)=\int\widetilde{\pi}_{N}\left(\boldsymbol{k},\boldsymbol{\alpha}_{1:P}^{1:N},\boldsymbol{\theta}\right)d\boldsymbol{\alpha}_{1:P}^{\left(-\boldsymbol{k}\right)}=\frac{\pi\left(\boldsymbol{\theta},\boldsymbol{\alpha}_{1:P}^{\boldsymbol{k}}\right)}{N^{P}}.
\end{eqnarray*}
\end{proof}

\end{lemma}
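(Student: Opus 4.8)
The plan is to integrate the joint expanded target $\widetilde{\pi}_N(\bs k, \bs \alpha_{1:P}^{1:N}, \bs \theta)$ defined in \cref{eq: expanded target density} over the coordinates $\bs \alpha_{1:P}^{(-\bs k)}$, i.e.\ over all the particles except the selected one for each individual, while holding $\bs k$, $\bs \theta$ and $\bs \alpha_{1:P}^{\bs k}$ fixed. The key observation is that the prefactor $\pi(\bs \theta, \bs \alpha_{1:P}^{\bs k})/N^P$ depends only on these fixed quantities, so it may be pulled out of the integral, reducing the problem to showing that the remaining factor $\psi^\theta(\bs \alpha_{1:P}^{1:N})/\prod_{i=1}^P m_i(\bs \alpha_i^{k_i}|\bs \theta, \bs y_i)$ integrates to one over $\bs \alpha_{1:P}^{(-\bs k)}$.

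First I would substitute $\psi^\theta(\bs \alpha_{1:P}^{1:N}) = \prod_{j=1}^N \prod_{i=1}^P m_i(\bs \alpha_i^j|\bs \theta, \bs y_i)$ from \cref{eq:particle dist} and cancel the single factor $\prod_{i=1}^P m_i(\bs \alpha_i^{k_i}|\bs \theta, \bs y_i)$ in the denominator. This rewrites the ratio as $\prod_{i=1}^P \prod_{j\neq k_i} m_i(\bs \alpha_i^j|\bs \theta, \bs y_i)$, which is precisely a product of proposal densities over the non-selected particles and hence a normalised probability density in $\bs \alpha_{1:P}^{(-\bs k)}$.

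Then I would apply Fubini's theorem to interchange the multidimensional integral with the product, obtaining $\int \prod_{i=1}^P\prod_{j\neq k_i} m_i(\bs \alpha_i^j|\bs \theta, \bs y_i)\, d\bs \alpha_{1:P}^{(-\bs k)} = \prod_{i=1}^P\prod_{j\neq k_i}\int m_i(\bs \alpha_i^j|\bs \theta, \bs y_i)\, d\bs \alpha_i^j = 1$, since each $m_i(\cdot|\bs \theta, \bs y_i)$ is a proper density. Combining this with the pulled-out prefactor yields the claimed marginal $\pi(\bs \theta, \bs \alpha_{1:P}^{\bs k})/N^P$.

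The argument is essentially bookkeeping, so I do not expect a genuine obstacle; the only point requiring care is the indexing --- being precise that, for a given $\bs k$, the $P(N-1)$ coordinates making up $\bs \alpha_{1:P}^{(-\bs k)}$ are exactly those left as integration variables after the cancellation, so that the prefactor really is constant on the domain of integration. It is also worth noting that nothing beyond the implicit requirement that each $m_i(\cdot|\bs\theta,\bs y_i)$ be a proper probability density is needed here; \cref{ass: use of IS} plays no further role in this particular lemma.
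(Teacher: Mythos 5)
Your proposal is correct and takes essentially the same route as the paper: integrate the expanded target over $\boldsymbol{\alpha}_{1:P}^{(-\boldsymbol{k})}$, with the cancellation of the selected particles' proposal densities and the unit integral of the remaining product of proposals being exactly the bookkeeping the paper leaves implicit. Your version simply spells out the steps the paper's one-line proof omits.
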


\begin{proof}[Proof of Theorem~\ref{thm:convergence of pmmh}]
The proof follows from Assumption \ref{ass: use of IS}, Lemmas \ref{lemma1} and \ref{lemma2} and Theorem 1 in \citet{Andrieu:2009}.
\end{proof}

\begin{proof}[Proof of Theorem~\ref{thm: converg of PMwG}]
The proof follows the approach in Theorem 5 in \citet[pg. 300]{Andrieu:2010}.
The algorithm is a Metropolis within Gibbs sampler targeting \Cref{eq: expanded target density}.
Hence we focus on establishing irreducibility and aperiodicity.
It will be convenient to use the notation
$
\widetilde{\pi}_{N}\left(\boldsymbol{k},\boldsymbol{\alpha}_{1:P}^{1:N},\boldsymbol{\theta}\right)=\widetilde{\pi}_{N}\left(\boldsymbol{k},\boldsymbol{\alpha}_{1:P}^{\boldsymbol{k}},\boldsymbol{\alpha}_{1:P}^{-\boldsymbol{k}},\boldsymbol{\theta}\right)
$
to partition the particles $\boldsymbol{\alpha}_{1:P}^{1:N}$ into $\boldsymbol{\alpha}_{1:P}^{\boldsymbol{k}}$ and $\boldsymbol{\alpha}_{1:P}^{-\boldsymbol{k}}$
which are the particles selected and not selected by the indices $\boldsymbol{k}$ respectively.

Let $\boldsymbol{k} \in \{1, \ldots, N\}^P$, $D \in {\cal B}\left( {\cal R}^P \right)$, $E \in {\cal B} \left( {\cal R}^{(N-1) P} \right)$
and $F \in {\cal B} \left( \Theta \right)$ be such that
$\widetilde{\pi}_{N}\left( \{\boldsymbol{k}\} \times D \times E \times F \right) > 0$.

From Assumption \ref{ass: use of IS} it is possible to show that accessible sets for the Metropolis within Gibbs sampler
are also marginally accessible by the particle Metropolis within Gibbs sampler.
From this and Assumption \ref{ass: gibbs1}, we deduce that  there is a finite $j > 0$ such that
${\cal L}_{PMwG} \left\{ \left( \boldsymbol{K}(j), \boldsymbol{\alpha}_{1:P}(j), \boldsymbol{\theta}(j) \right)
\in \{\boldsymbol{k}  \}\times D \times F  \right\} > 0$.

Now because Step 2 consists of a Gibbs step using $\widetilde{\pi}_{N}(\cdot)$, we deduce that
\begin{eqnarray*}
{\cal L}_{PMwG} \left\{ \left( \boldsymbol{K}(j), \boldsymbol{\alpha}_{1:P}^{\boldsymbol{K}(j)},\boldsymbol{\alpha}_{1:P}^{-\boldsymbol{K}(j)} \boldsymbol{\theta}(j) \right)
\in \{\boldsymbol{k}  \}\times D \times E \times F  \right\} > 0
\end{eqnarray*}
and the irreducibility of the PMwG samper follows.

Aperiodicity can be proved by contradiction since, if the PMwG sample is periodic then from Assumption \ref{ass: use of IS} so is the MwG sampler, which contradicts Assumption \ref{ass: gibbs1}.
The result now follows from
Theorem 1 of \cite{tierney1994}. 
\end{proof}

\section{Empirical Results \label{sec:Empirical-Results}}

\begin{table}[H]
\caption{Estimation results for male Individual Effects and Dependence Parameters.
Posterior mean estimates with 95\% credible intervals (in brackets).
\label{tab:Estimation-Results-for Male dependence parameters}}

\centering{}%
\begin{tabular}{ccccc}
\hline
 & Gaussian & Clayton & Gumbel & Probit\tabularnewline
\hline
$\rho_{\alpha}$ & $\underset{\left(-0.03,0.10\right)}{0.03}$ & $\underset{\left(-0.04,0.09\right)}{0.02}$ & $\underset{\left(-0.03,0.10\right)}{0.03}$ & $\underset{\left(0.03,0.15\right)}{0.09}$\tabularnewline
Kendall tau & $\underset{\left(-0.02,0.07\right)}{0.02}$ & $\underset{\left(-0.03,0.06\right)}{0.01}$ & $\underset{\left(-0.02,0.06\right)}{0.02}$ & $\underset{\left(0.02,0.09\right)}{0.06}$\tabularnewline
\hline
$\theta_{dep}$ & $\underset{\left(-0.07,0.11\right)}{0.02}$ & $\underset{\left(0.11,0.30\right)}{0.19}$ & $\underset{\left(1.00,1.06\right)}{1.02}$ & $\underset{\left(-0.10,0.09\right)}{-0.00}$\tabularnewline
Kendall tau & $\underset{\left(-0.04,0.07\right)}{0.01}$ & $\underset{\left(0.05,0.13\right)}{0.09}$ & $\underset{\left(0.00,0.06\right)}{0.02}$ & $\underset{\left(-0.06,0.06\right)}{-0.00}$\tabularnewline
Lower/Upper Tail & NA & $\underset{\left(0.00,0.10\right)}{0.03}$ & $\underset{\left(0.00,0.08\right)}{0.03}$ & NA\tabularnewline
\hline
$\tau_{1}^{2}$ & $\underset{\left(3.63,5.04\right)}{4.31}$ & $\underset{\left(3.60,5.09\right)}{4.29}$ & $\underset{\left(3.71,5.15\right)}{4.37}$ & $\underset{\left(3.66,5.07\right)}{4.34}$\tabularnewline
$\tau_{2}^{2}$ & $\underset{\left(0.40,0.47\right)}{0.43}$ &  $\underset{\left(0.40,0.47\right)}{0.43}$ & $\underset{\left(0.40,0.47\right)}{0.44}$ & $\underset{\left(2.45,3.09\right)}{2.76}$\tabularnewline
\hline
\end{tabular}
\end{table}

\begin{table}[H]
\caption{Estimation results for female Individual Effects and Dependence Parameters.
Posterior mean estimates with 95\% credible intervals (in brackets).
\label{tab:Estimation-Results-for feMale dependence parameters}}

\centering{}%
\begin{tabular}{ccccc}
\hline
 & Gaussian & Clayton & Gumbel & Probit\tabularnewline
\hline
$\rho_{\alpha}$ & $\underset{\left(-0.06,0.08\right)}{0.01}$ & $\underset{\left(-0.08,0.06\right)}{-0.01}$ & $\underset{\left(-0.06,0.07\right)}{0.00}$ & $\underset{\left(-0.04,0.08\right)}{0.02}$\tabularnewline
Kendall tau & $\underset{\left(-0.04,0.05\right)}{0.00}$ & $\underset{\left(-0.05,0.04\right)}{-0.01}$ & $\underset{\left(-0.04,0.04\right)}{0.00}$ & $\underset{\left(-0.02,0.05\right)}{0.01}$\tabularnewline
\hline
$\theta_{dep}$ & $\underset{\left(-0.07,0.11\right)}{0.02}$ & $\underset{\left(0.11,0.29\right)}{0.19}$ & $\underset{\left(1.00,1.06\right)}{1.02}$ & $\underset{\left(-0.08,0.08\right)}{-0.00}$\tabularnewline
Kendall tau & $\underset{\left(-0.05,0.07\right)}{0.01}$ & $\underset{\left(0.05,0.13\right)}{0.09}$ & $\underset{\left(0.00,0.05\right)}{0.02}$ & $\underset{\left(-0.05,0.05\right)}{-0.00}$\tabularnewline
Lower/Upper tail & NA & $\underset{\left(0.00,0.09\right)}{0.03}$ & $\underset{\left(0.00,0.07\right)}{0.03}$ & NA\tabularnewline
\hline
$\tau_{1}^{2}$ & $\underset{\left(3.35,4.62\right)}{3.93}$ & $\underset{\left(3.31,4.58\right)}{3.90}$ & $\underset{\left(3.37,4.69\right)}{3.99}$ & $\underset{\left(3.38,4.63\right)}{3.96}$\tabularnewline
$\tau_{2}^{2}$ & $\underset{\left(0.34,0.40\right)}{0.37}$ & $\underset{\left(0.34,0.40\right)}{0.37}$ & $\underset{\left(0.34,0.40\right)}{0.37}$ & $\underset{\left(1.91,2.36\right)}{2.13}$\tabularnewline
\hline
\end{tabular}
\end{table}

\begin{table}[H]
\caption{Estimation results for male Binge/Excessive Drinking $\left(y_{1}\right)$
(balanced panel). Posterior mean estimates with 95\% credible
intervals (in brackets). \label{tab:Estimation-Results-for male binge }}

\centering{}%
\begin{tabular}{ccccc}
\hline
 & Gaussian & Probit & Clayton & Gumbel\tabularnewline
\hline
university/degree & $\underset{\left(-0.97,-0.41\right)}{-0.69}$ & $\underset{\left(-0.97,-0.40\right)}{-0.69}$ & $\underset{\left(-0.96,-0.40\right)}{-0.68}$ & $\underset{\left(-0.97,-0.40\right)}{-0.68}$\tabularnewline
diploma/certificate & $\underset{\left(-0.22,0.22\right)}{-0.01}$ & $\underset{\left(-0.23,0.22\right)}{0.00}$ & $\underset{\left(-0.21,0.23\right)}{0.01}$ & $\underset{\left(-0.22,0.23\right)}{0.00}$\tabularnewline
married & $\underset{\left(-0.61,-0.21\right)}{-0.41}$ & $\underset{\left(-0.61,-0.21\right)}{-0.41}$ & $\underset{\left(-0.62,-0.22\right)}{-0.41}$ & $\underset{\left(-0.62,-0.22\right)}{-0.42}$\tabularnewline
income & $\underset{\left(-0.11,0.02\right)}{-0.05}$ & $\underset{\left(-0.11,0.02\right)}{-0.05}$ & $\underset{\left(-0.11,0.01\right)}{-0.05}$ & $\underset{\left(-0.11,0.02\right)}{-0.05}$\tabularnewline
num. child & $\underset{\left(-0.27,-0.05\right)}{-0.16}$ & $\underset{\left(-0.27,-0.05\right)}{-0.16}$ & $\underset{\left(-0.27,-0.05\right)}{-0.16}$ & $\underset{\left(-0.27,-0.05\right)}{-0.16}$\tabularnewline
gave birth & $\underset{\left(-0.66,-0.05\right)}{-0.35}$ & $\underset{\left(-0.66,-0.06\right)}{-0.35}$ & $\underset{\left(-0.66,-0.04\right)}{-0.34}$ & $\underset{\left(-0.67,-0.05\right)}{-0.36}$\tabularnewline
death of a friend & $\underset{\left(-0.12,0.26\right)}{0.07}$ & $\underset{\left(-0.12,0.25\right)}{0.07}$ & $\underset{\left(-0.12,0.25\right)}{0.07}$ & $\underset{\left(-0.12,0.26\right)}{0.07}$\tabularnewline
death of a spouse/child & $\underset{\left(-0.62,0.98\right)}{0.20}$ & $\underset{\left(-0.63,1.00\right)}{0.21}$ & $\underset{\left(-0.74,1.00\right)}{0.17}$ & $\underset{\left(-0.74,1.01\right)}{0.17}$\tabularnewline
personal injury & $\underset{\left(-0.08,0.32\right)}{0.12}$ & $\underset{\left(-0.08,0.32\right)}{0.12}$ & $\underset{\left(-0.06,0.34\right)}{0.14}$ & $\underset{\left(-0.08,0.33\right)}{0.13}$\tabularnewline
getting married & $\underset{\left(-0.61,0.15\right)}{-0.22}$ & $\underset{\left(-0.60,0.14\right)}{-0.22}$ & $\underset{\left(-0.61,0.13\right)}{-0.23}$ &  $\underset{\left(-0.61,0.14\right)}{-0.23}$\tabularnewline
changed residence & $\underset{\left(-0.11,0.20\right)}{0.04}$ & $\underset{\left(-0.12,0.19\right)}{0.04}$ & $\underset{\left(-0.10,0.20\right)}{0.05}$ & $\underset{\left(-0.12,0.20\right)}{0.04}$\tabularnewline
victim of crime & $\underset{\left(-0.31,0.24\right)}{-0.04}$ & $\underset{\left(-0.32,0.22\right)}{-0.04}$ & $\underset{\left(-0.30,0.24\right)}{-0.03}$ & $\underset{\left(-0.33,0.23\right)}{-0.05}$\tabularnewline
promoted at work & $\underset{\left(-0.14,0.30\right)}{0.08}$ & $\underset{\left(-0.14,0.30\right)}{0.08}$ & $\underset{\left(-0.15,0.30\right)}{0.08}$ & $\underset{\left(-0.14,0.30\right)}{0.08}$\tabularnewline
back with spouse & $\underset{\left(-0.84,0.40\right)}{-0.21}$ & $\underset{\left(-0.84,0.41\right)}{-0.21}$ & $\underset{\left(-0.84,0.42\right)}{-0.20}$ & $\underset{\left(-0.85,0.39\right)}{-0.21}$\tabularnewline
separated from spouse & $\underset{\left(-0.26,0.33\right)}{0.03}$ & $\underset{\left(-0.27,0.31\right)}{0.02}$ & $\underset{\left(-0.27,0.32\right)}{0.03}$ & $\underset{\left(-0.27,0.31\right)}{0.02}$\tabularnewline
improvement in financial & $\underset{\left(-0.31,0.30\right)}{-0.00}$ & $\underset{\left(-0.31,0.32\right)}{-0.00}$ & $\underset{\left(-0.32,0.31\right)}{-0.00}$ & $\underset{\left(-0.32,0.32\right)}{-0.00}$\tabularnewline
worsening in financial & $\underset{\left(-0.43,0.21\right)}{-0.11}$ & $\underset{\left(-0.45,0.20\right)}{-0.12}$ & $\underset{\left(-0.43,0.21\right)}{-0.10}$ & $\underset{\left(-0.44,0.20\right)}{-0.12}$\tabularnewline
\hline
$\min_{1:23}IACT\left(\beta_{1i}\right)$ & 1.00 & 1.00 & 1.00 & 1.13\tabularnewline
$\max_{1:23}IACT\left(\beta_{1i}\right)$ & 7.58 & 8.19 & 8.67 & 8.64\tabularnewline
${\rm mean}_{1:23}\left(\beta_{1i}\right)$ & 2.55 & 2.55 & 2.59 & 2.60\tabularnewline
\hline
\end{tabular}
\end{table}

\begin{table}[H]
\caption{Estimation results for male Mental Health score $\left(y_{2}\right)$
(balanced panel). Posterior mean estimates with 95\% credible
intervals (in brackets). \label{tab:Estimation-Results-for male mental health}}

\centering{}%
\begin{tabular}{ccccc}
\hline
 & Gaussian & Probit & Clayton & Gumbel\tabularnewline
\hline
university/degree & $\underset{\left(-0.14,0.03\right)}{-0.05}$ & $\underset{\left(-0.29,0.07\right)}{-0.11}$ & $\underset{\left(-0.13,0.03\right)}{-0.05}$ & $\underset{\left(-0.14,0.03\right)}{-0.05}$\tabularnewline
diploma/certificate & $\underset{\left(-0.10,0.05\right)}{-0.02}$ & $\underset{\left(-0.22,0.10\right)}{-0.06}$ & $\underset{\left(-0.09,0.05\right)}{-0.02}$ & $\underset{\left(-0.10,0.05\right)}{-0.03}$\tabularnewline
married & $\underset{\left(-0.12,0.01\right)}{-0.05}$ & $\underset{\left(-0.28,0.01\right)}{-0.13}$ & $\underset{\left(-0.12,0.01\right)}{-0.05}$ & $\underset{\left(-0.12,0.01\right)}{-0.05}$\tabularnewline
income & $\underset{\left(-0.02,0.03\right)}{0.01}$ & $\underset{\left(-0.07,0.02\right)}{-0.02}$ & $\underset{\left(-0.02,0.03\right)}{0.01}$ & $\underset{\left(-0.02,0.04\right)}{0.01}$\tabularnewline
num. child & $\underset{\left(-0.03,0.06\right)}{0.02}$ & $\underset{\left(-0.05,0.10\right)}{0.03}$ & $\underset{\left(-0.03,0.06\right)}{0.02}$ & $\underset{\left(-0.03,0.06\right)}{0.02}$\tabularnewline
gave birth & $\underset{\left(-0.05,0.19\right)}{0.07}$ & $\underset{\left(-0.11,0.30\right)}{0.09}$ & $\underset{\left(-0.05,0.19\right)}{0.07}$ & $\underset{\left(-0.05,0.20\right)}{0.07}$\tabularnewline
death of a friend & $\underset{\left(-0.05,0.10\right)}{0.02}$ & $\underset{\left(-0.07,0.19\right)}{0.06}$ & $\underset{\left(-0.05,0.10\right)}{0.02}$ & $\underset{\left(-0.05,0.09\right)}{0.02}$\tabularnewline
death of a spouse/child & $\underset{\left(-0.14,0.49\right)}{0.18}$ & $\underset{\left(0.02,1.11\right)}{0.56}$ & $\underset{\left(-0.13,0.50\right)}{0.18}$ & $\underset{\left(-0.14,0.49\right)}{0.18}$\tabularnewline
personal injury & $\underset{\left(0.30,0.46\right)}{0.38}$ & $\underset{\left(0.48,0.75\right)}{0.62}$ & $\underset{\left(0.30,0.46\right)}{0.38}$ & $\underset{\left(0.30,0.46\right)}{0.38}$\tabularnewline
getting married & $\underset{\left(-0.18,0.15\right)}{-0.02}$ & $\underset{\left(-0.32,0.24\right)}{-0.04}$ & $\underset{\left(-0.18,0.15\right)}{-0.01}$ &  $\underset{\left(-0.18,0.15\right)}{-0.01}$\tabularnewline
changed residence & $\underset{\left(-0.06,0.07\right)}{0.00}$ & $\underset{\left(-0.18,0.05\right)}{-0.07}$ & $\underset{\left(-0.06,0.07\right)}{0.00}$ & $\underset{\left(-0.07,0.07\right)}{0.00}$\tabularnewline
victim of crime & $\underset{\left(0.04,0.27\right)}{0.16}$ & $\underset{\left(-0.01,0.40\right)}{0.19}$ & $\underset{\left(0.04,0.27\right)}{0.16}$ & $\underset{\left(0.04,0.27\right)}{0.15}$\tabularnewline
promoted at work & $\underset{\left(-0.12,0.06\right)}{-0.03}$ & $\underset{\left(-0.12,0.19\right)}{0.04}$ & $\underset{\left(-0.12,0.06\right)}{-0.03}$ & $\underset{\left(-0.12,0.06\right)}{-0.03}$\tabularnewline
back with spouse & $\underset{\left(-0.12,0.40\right)}{0.14}$ & $\underset{\left(-0.11,0.82\right)}{0.36}$ & $\underset{\left(-0.12,0.41\right)}{0.14}$ & $\underset{\left(-0.13,0.40\right)}{0.13}$\tabularnewline
separated from spouse & $\underset{\left(0.10,0.37\right)}{0.24}$ & $\underset{\left(0.13,0.60\right)}{0.36}$ & $\underset{\left(0.10,0.37\right)}{0.23}$ & $\underset{\left(0.10,0.37\right)}{0.24}$\tabularnewline
improvement in financial & $\underset{\left(-0.13,0.11\right)}{-0.01}$ & $\underset{\left(-0.27,0.16\right)}{-0.05}$ & $\underset{\left(-0.14,0.11\right)}{-0.01}$ & $\underset{\left(-0.13,0.12\right)}{-0.01}$\tabularnewline
worsening in financial & $\underset{\left(0.24,0.49\right)}{0.37}$ & $\underset{\left(0.24,0.69\right)}{0.47}$ & $\underset{\left(0.24,0.49\right)}{0.37}$ & $\underset{\left(0.24,0.49\right)}{0.37}$\tabularnewline
\hline
$\min_{1:23}IACT\left(\beta_{2i}\right)$ & 1.00 & 1.00 & 1.00 & 1.01\tabularnewline
$\max_{1:23}IACT\left(\beta_{2i}\right)$ & 3.31 & 9.92 & 3.60 & 2.73\tabularnewline
${\rm mean}_{1:23}\left(\beta_{2i}\right)$ & 1.41 & 2.43 & 1.30 & 1.51\tabularnewline
\hline
\end{tabular}
\end{table}

\begin{table}[H]
\caption{Estimation results for female Binge/Excessive Drinking $\left(y_{1}\right)$.
Posterior mean estimates with 95\% credible intervals (in brackets).
\label{tab:Estimation-Results-for female binge drinking}}

\centering{}%
\begin{tabular}{ccccc}
\hline
 & Gaussian & Probit & Clayton & Gumbel\tabularnewline
\hline
university/degree & $\underset{\left(-0.61,-0.16\right)}{-0.38}$ & $\underset{\left(-0.61,-0.16\right)}{-0.39}$ & $\underset{\left(-0.62,-0.18\right)}{-0.40}$ & $\underset{\left(-0.63,-0.17\right)}{-0.40}$\tabularnewline
diploma/certificate & $\underset{\left(-0.30,0.13\right)}{-0.09}$ & $\underset{\left(-0.30,0.13\right)}{-0.09}$ & $\underset{\left(-0.31,0.12\right)}{-0.09}$ & $\underset{\left(-0.31,0.12\right)}{-0.09}$\tabularnewline
married & $\underset{\left(-0.68,-0.35\right)}{-0.51}$ & $\underset{\left(-0.69,-0.35\right)}{-0.52}$ & $\underset{\left(-0.70,-0.36\right)}{-0.53}$ & $\underset{\left(-0.70,-0.36\right)}{-0.53}$\tabularnewline
income & $\underset{\left(0.05,0.19\right)}{0.12}$ & $\underset{\left(0.05,0.20\right)}{0.12}$ & $\underset{\left(0.05,0.20\right)}{0.13}$ & $\underset{\left(0.05,0.20\right)}{0.13}$\tabularnewline
num. child & $\underset{\left(-0.22,0.00\right)}{-0.11}$ & $\underset{\left(-0.22,0.00\right)}{-0.11}$ & $\underset{\left(-0.22,0.01\right)}{-0.10}$ & $\underset{\left(-0.22,0.00\right)}{-0.11}$\tabularnewline
gave birth & $\underset{\left(-1.16,-0.49\right)}{-0.81}$ & $\underset{\left(-1.16,-0.48\right)}{-0.81}$ & $\underset{\left(-1.15,-0.47\right)}{-0.81}$ & $\underset{\left(-1.18,-0.48\right)}{-0.83}$\tabularnewline
death of a friend & $\underset{\left(-0.00,0.37\right)}{0.19}$ & $\underset{\left(-0.00,0.36\right)}{0.18}$ & $\underset{\left(-0.01,0.36\right)}{0.17}$ & $\underset{\left(-0.01,0.37\right)}{0.18}$\tabularnewline
death of a spouse/child & $\underset{\left(-0.90,0.56\right)}{-0.16}$ & $\underset{\left(-0.92,0.56\right)}{-0.16}$ & $\underset{\left(-0.95,0.56\right)}{-0.17}$ & $\underset{\left(-0.93,0.59\right)}{-0.16}$\tabularnewline
personal injury & $\underset{\left(-0.43,0.03\right)}{-0.20}$ & $\underset{\left(-0.43,0.03\right)}{-0.20}$ & $\underset{\left(-0.42,0.04\right)}{-0.18}$ & $\underset{\left(-0.44,0.03\right)}{-0.20}$\tabularnewline
getting married & $\underset{\left(-0.17,0.55\right)}{0.19}$ & $\underset{\left(-0.17,0.55\right)}{0.20}$ & $\underset{\left(-0.16,0.55\right)}{0.20}$ &  $\underset{\left(-0.17,0.56\right)}{0.20}$\tabularnewline
changed residence & $\underset{\left(0.04,0.32\right)}{0.18}$ & $\underset{\left(0.04,0.31\right)}{0.17}$ & $\underset{\left(0.04,0.32\right)}{0.18}$ & $\underset{\left(0.04,0.32\right)}{0.18}$\tabularnewline
victim of crime & $\underset{\left(-0.43,0.15\right)}{-0.14}$ & $\underset{\left(-0.44,0.15\right)}{-0.14}$ & $\underset{\left(-0.43,0.15\right)}{-0.13}$ & $\underset{\left(-0.44,0.14\right)}{-0.15}$\tabularnewline
promoted at work & $\underset{\left(0.12,0.53\right)}{0.33}$ & $\underset{\left(0.12,0.53\right)}{0.33}$ & $\underset{\left(0.11,0.53\right)}{0.32}$ & $\underset{\left(0.12,0.53\right)}{0.33}$\tabularnewline
back with spouse & $\underset{\left(-0.91,0.29\right)}{-0.30}$ & $\underset{\left(-0.88,0.29\right)}{-0.31}$ & $\underset{\left(-0.91,0.30\right)}{-0.30}$ & $\underset{\left(-0.93,0.27\right)}{-0.32}$\tabularnewline
separated from spouse & $\underset{\left(0.11,0.65\right)}{0.38}$ & $\underset{\left(0.11,0.65\right)}{0.38}$ & $\underset{\left(0.11,0.65\right)}{0.38}$ & $\underset{\left(0.10,0.65\right)}{0.38}$\tabularnewline
improvement in financial & $\underset{\left(-0.15,0.44\right)}{0.14}$ & $\underset{\left(-0.17,0.44\right)}{0.14}$ & $\underset{\left(-0.16,0.45\right)}{0.14}$ & $\underset{\left(-0.15,0.45\right)}{0.15}$\tabularnewline
worsening in financial & $\underset{\left(-0.10,0.56\right)}{0.24}$ & $\underset{\left(-0.10,0.56\right)}{0.24}$ & $\underset{\left(-0.06,0.61\right)}{0.27}$ & $\underset{\left(-0.08,0.57\right)}{0.25}$\tabularnewline
\hline
$\min_{1:23}IACT\left(\beta_{1i}\right)$ & 1.25 & 1.01 & 1.00 & 1.00\tabularnewline
$\max_{1:23}IACT\left(\beta_{1i}\right)$ & 6.55 & 7.56 & 6.78 & 7.58\tabularnewline
${\rm mean}_{1:23}\left(\beta_{1i}\right)$ & 2.35 & 2.41 & 2.41 & 2.27\tabularnewline
\hline
\end{tabular}
\end{table}

\begin{table}[H]
\caption{Estimation results for female Mental Health score $\left(y_{2}\right)$
(balanced panel) Posterior mean estimates with 95\% credible intervals
(in brackets). \label{tab:Estimation-Results-for female mental health}}

\centering{}%
\begin{tabular}{ccccc}
\hline
 & Gaussian & Probit & Clayton & Gumbel\tabularnewline
\hline
university/degree & $\underset{\left(-0.18,-0.04\right)}{-0.11}$ & $\underset{\left(-0.39,-0.10\right)}{-0.24}$ & $\underset{\left(-0.18,-0.04\right)}{-0.11}$ & $\underset{\left(-0.18,-0.04\right)}{-0.11}$\tabularnewline
diploma/certificate & $\underset{\left(-0.10,0.03\right)}{-0.04}$ & $\underset{\left(-0.21,0.07\right)}{-0.07}$ & $\underset{\left(-0.10,0.03\right)}{-0.04}$ & $\underset{\left(-0.10,0.03\right)}{-0.04}$\tabularnewline
married & $\underset{\left(-0.21,-0.09\right)}{-0.15}$ & $\underset{\left(-0.36,-0.14\right)}{-0.25}$ & $\underset{\left(-0.21,-0.09\right)}{-0.15}$ & $\underset{\left(-0.20,-0.09\right)}{-0.15}$\tabularnewline
income & $\underset{\left(-0.02,0.03\right)}{0.01}$ & $\underset{\left(-0.04,0.04\right)}{0.00}$ & $\underset{\left(-0.02,0.03\right)}{0.01}$ & $\underset{\left(-0.02,0.03\right)}{0.01}$\tabularnewline
num. child & $\underset{\left(-0.04,0.05\right)}{0.01}$ & $\underset{\left(-0.05,0.09\right)}{0.02}$ & $\underset{\left(-0.03,0.06\right)}{0.01}$ & $\underset{\left(-0.03,0.06\right)}{0.01}$\tabularnewline
gave birth & $\underset{\left(0.02,0.26\right)}{0.14}$ & $\underset{\left(0.09,0.47\right)}{0.28}$ & $\underset{\left(0.03,0.26\right)}{0.14}$ & $\underset{\left(0.02,0.25\right)}{0.14}$\tabularnewline
death of a friend & $\underset{\left(-0.04,0.10\right)}{0.03}$ & $\underset{\left(-0.09,0.14\right)}{0.02}$ & $\underset{\left(-0.04,0.10\right)}{0.03}$ & $\underset{\left(-0.04,0.10\right)}{0.03}$\tabularnewline
death of a spouse/child & $\underset{\left(0.01,0.52\right)}{0.27}$ & $\underset{\left(0.08,0.97\right)}{0.52}$ & $\underset{\left(0.01,0.52\right)}{0.26}$ & $\underset{\left(0.01,0.52\right)}{0.27}$\tabularnewline
personal injury & $\underset{\left(0.36,0.51\right)}{0.44}$ & $\underset{\left(0.58,0.85\right)}{0.72}$ & $\underset{\left(0.36,0.51\right)}{0.44}$ & $\underset{\left(0.36,0.52\right)}{0.44}$\tabularnewline
getting married & $\underset{\left(-0.19,0.11\right)}{-0.04}$ & $\underset{\left(-0.41,0.09\right)}{-0.16}$ & $\underset{\left(-0.19,0.12\right)}{-0.04}$ &  $\underset{\left(-0.19,0.11\right)}{-0.04}$\tabularnewline
changed residence & $\underset{\left(-0.04,0.09\right)}{0.03}$ & $\underset{\left(-0.11,0.09\right)}{-0.01}$ & $\underset{\left(-0.03,0.09\right)}{0.03}$ & $\underset{\left(-0.04,0.08\right)}{0.02}$\tabularnewline
victim of crime & $\underset{\left(-0.04,0.18\right)}{0.07}$ & $\underset{\left(-0.08,0.30\right)}{0.11}$ & $\underset{\left(-0.04,0.18\right)}{0.07}$ & $\underset{\left(-0.05,0.18\right)}{0.07}$\tabularnewline
promoted at work & $\underset{\left(-0.08,0.10\right)}{0.01}$ & $\underset{\left(0.02,0.32\right)}{0.17}$ & $\underset{\left(-0.08,0.11\right)}{0.01}$ & $\underset{\left(-0.08,0.10\right)}{0.01}$\tabularnewline
back with spouse & $\underset{\left(-0.01,0.51\right)}{0.25}$ & $\underset{\left(-0.10,0.81\right)}{0.35}$ & $\underset{\left(-0.01,0.52\right)}{0.25}$ & $\underset{\left(-0.01,0.51\right)}{0.26}$\tabularnewline
separated from spouse & $\underset{\left(0.04,0.30\right)}{0.17}$ & $\underset{\left(0.11,0.52\right)}{0.32}$ & $\underset{\left(0.04,0.30\right)}{0.17}$ & $\underset{\left(0.04,0.30\right)}{0.17}$\tabularnewline
improvement in financial & $\underset{\left(-0.14,0.09\right)}{-0.02}$ & $\underset{\left(-0.11,0.26\right)}{0.07}$ & $\underset{\left(-0.14,0.09\right)}{-0.02}$ & $\underset{\left(-0.14,0.08\right)}{-0.02}$\tabularnewline
worsening in financial & $\underset{\left(0.35,0.60\right)}{0.48}$ & $\underset{\left(0.40,0.86\right)}{0.63}$ & $\underset{\left(0.35,0.61\right)}{0.48}$ & $\underset{\left(0.34,0.60\right)}{0.47}$\tabularnewline
\hline
$\min_{1:23}IACT\left(\beta_{2i}\right)$ & 1.23 & 1.00 & 1.00 & 1.00\tabularnewline
$\max_{1:23}IACT\left(\beta_{2i}\right)$ & 2.48 & 5.59 & 3.17 & 2.94\tabularnewline
${\rm mean}_{1:23}\left(\beta_{2i}\right)$ & 1.58 & 2.09 & 1.41 & 1.18\tabularnewline
\hline
\end{tabular}
\end{table}
\end{appendix}
\clearpage
\renewcommand{\theequation}{S\arabic{equation}}
\renewcommand{\thesection}{S\arabic{section}}
\renewcommand{\theproposition}{S\arabic{proposition}}
\renewcommand{\theassumption}{S\arabic{assumption}}
\renewcommand{\thelemma}{S\arabic{lemma}}
\renewcommand{\thecorollary}{S\arabic{corollary}}
\renewcommand{\thealgorithm}{S\arabic{algorithm}}
\renewcommand{\thefigure}{S\arabic{figure}}
\renewcommand{\thetable}{S\arabic{table}}
\renewcommand{\thepage}{S\arabic{page}}
\renewcommand{\thetable}{S\arabic{table}}
\renewcommand{\thepage}{S\arabic{page}}
\setcounter{page}{1}
\setcounter{section}{0}
\setcounter{equation}{0}
\setcounter{algorithm}{0}
\setcounter{table}{0}
\title{\Huge  \sf Online Supplement to \lq Efficient Bayesian estimation for flexible panel models for multivariate outcomes: Impact of life events on mental health and excessive alcohol consumption\rq}

\renewcommand\Authands{ and }

\maketitle

\section{Gaussian, Clayton and Gumbel Copula Models\label{sec:Archimedian-and-Elliptical copulas}}

The Gaussian copula is
\[
C^{\rm Gauss}\left(u_{1},u_{2};\rho \right)=\Phi_{2}\left(\Phi^{-1}\left(u_{1}\right),\Phi^{-1}\left(u_{2}\right)\right),
\]
It can capture both positive and negative dependence
and has the full range $\left(-1,1\right)$ of  pairwise correlations.
where $\Phi_{2}$ is the distribution function of the standard bivariate
normal distribution, $\Phi$ is the distribution function for the
standard univariate normal distribution, and $\theta$ is the dependence
parameter. The dependence structure in the Gaussian copula is symmetric, making
it  unsuitable for data that exhibits strong lower tail
or upper tail dependence.

The baseline model in \Cref{ss: biv probit with random effects}
is a special case of a Gaussian copula where all the univariate marginal distributions
are normally distributed.

The bivariate Clayton copula is
\[
C^{\rm Cl}\left(u_{1},u_{2};\theta\right)=\left(u_{1}^{-\theta}+u_{2}^{-\theta}-1\right)^{-\frac{1}{\theta}}
\]
It can  only
capture positive dependence, although one can reflect a Clayton
copula to model the dependence between $u_{1}$ and $-u_{2}$ instead.
The dependence parameter $\theta$ is defined on the interval of $\left(0,\infty\right)$.
It is suitable for the data which exhibits strong lower tail dependence
and weak upper tail dependence.

The bivariate Gumbel copula is
\[
C^{\rm Gu}\left(u_{1},u_{2};\theta\right)=\exp\left\{ -\left(\left(-\log u_{1}\right)^{\theta}+\left(-\log u_{2}\right)^{\theta}\right)^{1/\theta}\right\}
\]
For the Gumbel copula, the dependence parameter $\theta$ is defined
on the interval $\left[1,\infty\right)$, where $1$ represents the
independence case. The Gumbel copula only captures positive dependence.
It is suitable for data which exhibits strong upper tail dependence
and weak lower tail dependence.

\Cref{fig: copula plots} plots 10, 000 draws from each of the three copula models.

The conditional copula distribution functions for the bivariate copulas
used  in our article can be computed in closed form and are given
by
\[
C_{1|2}^{\rm Gauss}\left(u_{1}|u_{2};\rho \right)=\Phi\left(\frac{\Phi^{-1}\left(u_{1}\right)-\rho\Phi^{-1}\left(u_{2}\right)}{\sqrt{1-\rho^{2}}}\right)
\]

\[
C_{1|2}^{\rm Cl}\left(u_{1}|u_{2};\theta\right)=u_{2}^{-\theta -1}\left(u_{1}^{-\theta}+u_{2}^{-\theta}-1\right)^{-1-1/\theta}
\]

\begin{align*}
C_{1|2}^{\rm Gu}\left(u_{1}|u_{2};\theta\right) & =  C^{\rm Gu}\left(u_{1},u_{2};\theta\right)
\frac{1}{u_{2}}
\left(-\log u_{2}\right)^{\theta-1}\\
 &  \left\{ \left(-\log u_{1}\right)^{\theta}+\left(-\log u_{2}\right)^{\theta}\right\} ^{1/\theta-1}
\end{align*}

\begin{figure}[!h]
\centering
\caption{Left panel: 10000 draws from a Gaussian copula with $\theta=0.8$; center panel:
10000 draws from a Clayton copula with $\theta=6$; right panel: 10000 draws from a Gumbel copula with $\theta=6$
\label{fig: copula plots}}
\begin{tabular}{ccc}
\includegraphics[width=5cm,height=5cm]{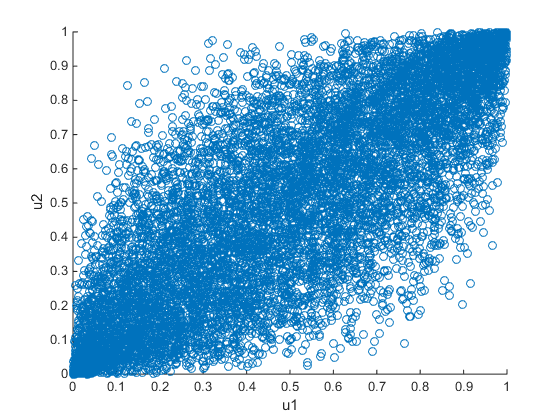} &
\includegraphics[width=5cm,height=5cm]{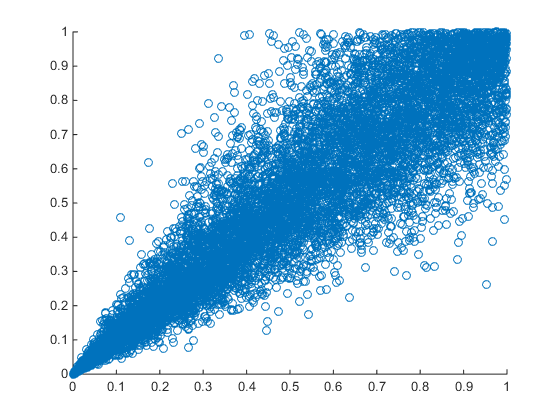} &
\includegraphics[width=5cm,height=5cm]{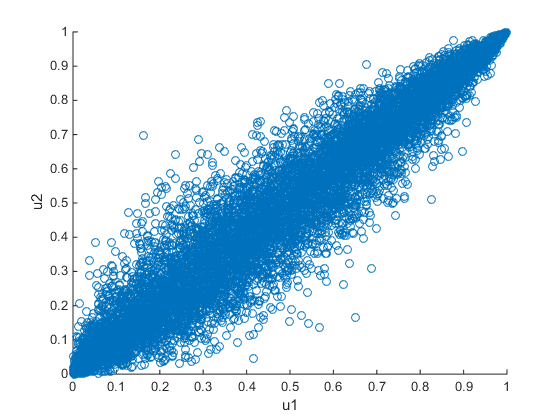}
\end{tabular}
\end{figure}

\section{Dependence Measures\label{sec:Measures-of-Dependence}}

We use Kendall's tau and upper and lower measures of tail dependence
to compare the the dependence structures implied
by different copula models because the Pearson (linear) correlation coefficient is not a good measure of general dependence
between two random variables as
 it only detects linear dependence.

Kendall's tau \citep[][p. 54]{Joe:2015}, is a popular measure of the degree
of concordance between two random variables. Let  $\left(U_{1},V_{1}\right)$
and $\left(U_{2},V_{2}\right)$ be two draws from the joint distribution of $U$
and $V$. Kendall's tau  is defined as
\[
\kappa_{\tau}:=\Pr\left[\left(U_{1}-U_{2}\right)\left(V_{1}-V_{2}\right)>0\right]-\Pr\left[\left(U_{1}-U_{2}\right)\left(V_{1}-V_{2}\right)<0\right]
\]
The value of $\kappa_{\tau}$ can vary between $-1$ to $1$ and is zero
if the two random variables are independent. The Gumbel and Clayton copulas only capture
positive dependence so that $0 \leq \kappa_{\tau}\leq 1 $
For the Gaussian copula, $-1\leq \kappa_{\tau}\leq 1 $. For the copula models we
consider,  $\kappa_{\tau}$ can be computed in closed form as a function
of its copula parameters.
\begin{align*}
\kappa_{\tau}^ {\rm Gauss}& =\frac{2}{\pi}\arcsin\left(\rho \right), \kappa_{\tau}^ {\rm Cl}=\frac{\theta_{Cl}}{\theta+2},
\kappa_{\tau}^ {\rm Gu}=1-\theta_{\rm Gu}^{-1}
\end{align*}

In many cases, the concordance between extreme (tail) values of random
variables is of interest, i.e.  the clustering of extreme events in the upper or lower tails.
For example, suppose we are interested in the relationship
between poor mental health and excessive/binge alcohol consumption
or good mental health and no alcohol consumption. This requires a
dependence measure for the upper and lower tails of the bivariate distribution.
In this case, measures of asymmetric dependence are often based on
conditional probabilities.
The lower and upper tail dependence measures are defined as \citep[][p. 62]{Joe:2015},
\[
\lambda^{U}:=\underset{\alpha\uparrow1}{\lim}\Pr\left(U_{1}>\alpha|U_{2}>\alpha\right)\quad {\rm and} \quad
\lambda^{L}:=\underset{\alpha\downarrow0}{\lim}\Pr\left(U_{1}<\alpha|U_{2}<\alpha\right)
\]
If $\lambda^{U}=0$, then the copula is said to have no upper tail dependence, and if $\lambda^{L}=0$ then the copula is said to have no lower tail dependence. The Gaussian copula has no lower or upper tail dependence.
For the Clayton copula,
$
\lambda^{L}=2^{-1/\theta} > 0$  and $\lambda^{U} = 0 $ so the Clayton copula has no upper tail dependence.
For the Gumbel copula,
$
\lambda^{U}=2-2^{1/\theta}$ and  $\lambda^{L} = 0 $. Hence the Gumbel copula has no lower tail dependence and it has upper tail dependence if and only if
$\theta  \neq 1 $.

\section{Simulation Mixed Discrete Linear Gaussian Regression\label{sec:Simulation-Mixed-Discrete}}
This section provides additional simulation study to compare our particle Metropolis-within-Gibbs approach to the MCMC-MH using mixed discrete linear Gaussian regression model given in Section \ref{ss: mixed biv model with random effects}.
The design is similar to the first experiment in Section \ref{sub:TNV} with $n=1000$ and $T=4$,
$x_{1,it},...,x_{10,it}\sim U\left(0,1\right)$, true parameters set
as follows:

$\beta_{1}=\left(-1.5,0.1,-0.2,0.2,-0.2,0.1,-0.2,0.1,-0.1,-0.2,0.2\right)^{'}$,

$\beta_{2}=\left(-0.5,0.1,0.2,-0.2,0.2,0.12,0.2,-0.2,0.12,-0.12,0.12\right)^{'}$,
$\tau_{1}^{2}=1$, $\tau_{2}^{2}=2.5$, and, $\rho_{\epsilon}=\rho_{\alpha}=0.5$.
We only compare MCMC-MH and PG methods for this simulation since they
can be applied more generally to panel data models with random effects.
\Cref{tab:Comparison-of-Different mixed discrete linear Gaussian1,tab:Comparison-of-Different mixed discrete linear Gaussian2,tab:Comparison-of-Different mixed discrete linear Gaussian2-1}
summarise the estimation results and show that the PG is still much
better than MCMC-MH methods.

\begin{table}[H]
\caption{Comparison of Inefficiency Factors (IACT) for the Parameters with
Different Sampling Schemes (PG, data augmentation, MCMC-MH) of Mixed
Discrete-Linear Gaussian regression Simulation with random effects
$P=1000$ and $T=4$. } \label{tab:Comparison-of-Different mixed discrete linear Gaussian1}

\centering{}%
\begin{tabular}{cccccc}
\hline
Param.  & PG & MH1 & MH10 & MH20 & MH50\tabularnewline
\hline
$\beta_{11}$ & $1.58$ & $3.64$ & $1.10$ & $1.34$ & $1.00$\tabularnewline
$\beta_{21}$ & $1.71$ & $4.15$ & $1.11$ & $1.55$ & $1.00$\tabularnewline
$\beta_{31}$ & $1.65$ & $4.49$ & $1.00$ & $1.40$ & $1.00$\tabularnewline
$\beta_{41}$ & $1.82$ & $3.34$ & $1.10$ & $1.53$ & $1.00$\tabularnewline
$\beta_{51}$ & $1.55$ & $3.13$ & $1.17$ & $1.48$ & $1.00$\tabularnewline
$\beta_{61}$ & $1.55$ & $3.67$ & $1.06$ & $1.40$ & $1.00$\tabularnewline
$\beta_{71}$ & $1.63$ & $4.48$ & $1.00$ & $1.39$ & $1.01$\tabularnewline
$\beta_{81}$ & $1.63$ & $3.86$ & $1.15$ & $1.46$ & $1.00$\tabularnewline
$\beta_{91}$ & $1.56$ & $4.30$ & $1.03$ & $1.32$ & $1.00$\tabularnewline
$\beta_{101}$ & $1.55$ & $3.69$ & $1.07$ & $1.36$ & $1.00$\tabularnewline
$\tau_{1}^{2}$ & $30.54$ & $398.18$ & $59.13$ & $46.50$ & $28.69$\tabularnewline
$\rho_{\alpha}$ & $6.78$ & $83.99$ & $14.06$ & $10.29$ & $6.71$\tabularnewline
\hline
\end{tabular}
\end{table}

\begin{table}[H]
\caption{Comparison of Inefficiency Factors (IACT) for the Parameters with
Different Sampling Schemes (PG, data augmentation, MCMC-MH) of Mixed
Discrete-Linear Gaussian regression Simulation with random effects
$P=1000$ and $T=4$. \label{tab:Comparison-of-Different mixed discrete linear Gaussian2}}

\centering{}%
\begin{tabular}{cccccc}
\hline
Param.  & PG & MH1 & MH10 & MH20 & MH50\tabularnewline
\hline
$\beta_{12}$ & $1.51$ & $3.50$ & $1.15$ & $1.21$ & $1.00$\tabularnewline
$\beta_{22}$ & $1.59$ & $3.52$ & $1.07$ & $1.30$ & $1.00$\tabularnewline
$\beta_{32}$ & $1.53$ & $3.35$ & $1.01$ & $1.40$ & $1.00$\tabularnewline
$\beta_{42}$ & $1.59$ & $4.71$ & $1.00$ & $1.45$ & $1.00$\tabularnewline
$\beta_{52}$ & $1.53$ & $4.66$ & $1.00$ & $1.34$ & $1.00$\tabularnewline
$\beta_{62}$ & $1.61$ & $3.99$ & $1.01$ & $1.32$ & $1.00$\tabularnewline
$\beta_{72}$ & $1.52$ & $4.06$ & $1.00$ & $1.27$ & $1.00$\tabularnewline
$\beta_{82}$ & $1.50$ & $3.29$ & $1.00$ & $1.40$ & $1.00$\tabularnewline
$\beta_{92}$ & $1.50$ & $4.15$ & $1.01$ & $1.35$ & $1.00$\tabularnewline
$\beta_{102}$ & $1.58$ & $4.82$ & $1.01$ & $1.27$ & $1.00$\tabularnewline
$\tau_{2}^{2}$ & $1.42$ & $12.62$ & $1.63$ & $1.60$ & $1.49$\tabularnewline
$\rho$ & $7.24$ & $13.35$ & $9.85$ & $7.72$ & $7.56$\tabularnewline
\hline
\end{tabular}
\end{table}

\begin{table}[H]
\caption{TNV comparison of different sampling schemes (PG, data augmentation,
MCMC-MH) of Mixed Discrete-Linear Gaussian regression Simulation with
random effects $P=1000$ and $T=4$. \label{tab:Comparison-of-Different mixed discrete linear Gaussian2-1}}

\centering{}%
\begin{tabular}{cccccc}
\hline
 & PG & MH1 & MH10 & MH20 & MH50\tabularnewline
\hline
Time & $0.21$ & $0.18$ & $0.67$ &  $1.23$ & $2.89$\tabularnewline
$IACT_{mean}$ & $3.24$ & $24.46$ & $4.40$ & $3.90$ & $2.65$\tabularnewline
TNV & $0.68$ & $4.40$ & $2.95$ & $4.80$ & $7.66$\tabularnewline
rel. TNV & $1$ & $6.47$ & $4.34$ & $7.06$ & $11.26$\tabularnewline
\hline
\end{tabular}
\end{table}

\section{Additional Results on Simulation Bivariate Probit Regression}\label{tab:Comparison-of-Different bivariate probit}
This section provides additional simulation results on bivariate probit regression models in Section \ref{sub:TNV}
\begin{table}[H]
\caption{Comparison of Inefficiency Factors (IACT) for the Parameters with
Different Sampling Schemes (PG, data augmentation, MCMC-MH) of bivariate
Probit regression Simulation with random effects $P=1000$ and $T=4$
\label{tab:Comparison-of-Different biv probit1}}

\centering{}%
\begin{tabular}{ccccccc}
\hline
Param.  & PG & Data Aug. & MH1 & MH10 & MH20 & MH50\tabularnewline
\hline
$\beta_{11}$ & $1.00$ & $9.73$ & $2.13$ & $1.00$ & $1.00$ & $1.17$\tabularnewline
$\beta_{21}$ & $1.00$ & $11.59$ & $2.15$ & $1.00$ & $1.00$ & $1.19$\tabularnewline
$\beta_{31}$ & $1.00$ & $10.46$ & $1.00$ & $1.00$ & $1.00$ & $1.12$\tabularnewline
$\beta_{41}$ & $1.00$ & $11.01$ & $2.42$ & $1.01$ & $1.00$ & $1.19$\tabularnewline
$\beta_{51}$ & $1.00$ & $11.49$ & $2.82$ & $1.00$ & $1.00$ & $1.17$\tabularnewline
$\beta_{61}$ & $1.03$ & $9.91$ & $1.01$ & $1.00$ & $1.00$ & $1.16$\tabularnewline
$\beta_{71}$ & $1.00$ & $10.39$ & $2.49$ & $1.00$ & $1.00$ & $1.15$\tabularnewline
$\beta_{81}$ & $1.00$ & $11.40$ & $2.78$ & $1.00$ & $1.00$ & $1.20$\tabularnewline
$\beta_{91}$ & $1.01$ & $14.43$ & $2.24$ & $1.00$ & $1.00$ & $1.25$\tabularnewline
$\beta_{101}$ & $1.00$ & $10.44$ & $3.89$ & $1.00$ & $1.01$ & $1.20$\tabularnewline
$\tau_{1}^{2}$ & $18.20$ & $79.95$ & $178.23$ & $19.18$ & $20.17$ & $17.17$\tabularnewline
$\rho_{\alpha}$ & $13.91$ & $62.36$  & $92.64$ & $20.25$ & $15.44$ & $13.41$\tabularnewline
\hline
\end{tabular}
\end{table}

\begin{table}[H]
\caption{Comparison of Inefficiency Factors (IACT) for the Parameters with
Different Sampling Schemes (PG, data augmentation, MCMC-MH) of bivariate
Probit regression Simulation with random effects $P=1000$ and $T=4$
\label{tab:Comparison-of-Different biv probit2}}

\centering{}%
\begin{tabular}{ccccccc}
\hline
Param.  & PG & Data Aug. & MH1 & MH10 & MH20 & MH50\tabularnewline
\hline
$\beta_{12}$ & $1.00$ & $16.49$ & $1.02$ & $1.00$ & $1.00$ & $1.12$\tabularnewline
$\beta_{22}$ & $1.00$ & $17.03$ & $1.00$ & $1.09$ & $1.00$ & $1.07$\tabularnewline
$\beta_{32}$ & $1.00$ & $19.18$ & $4.32$ & $1.00$ & $1.04$ & $1.17$\tabularnewline
$\beta_{42}$ & $1.00$ & $23.78$ & $1.02$ & $1.00$ & $1.00$ & $1.15$\tabularnewline
$\beta_{52}$ & $1.00$ & $16.50$ & $5.52$ & $1.01$ & $1.03$ & $1.19$\tabularnewline
$\beta_{62}$ & $1.00$ & $21.58$ & $1.02$ & $1.00$ & $1.00$ & $1.15$\tabularnewline
$\beta_{72}$ & $1.00$ & $15.53$ & $1.00$ & $1.00$ & $1.00$ & $1.06$\tabularnewline
$\beta_{82}$ & $1.00$ & $17.08$ & 1.03  & $1.05$ & $1.00$ & $1.13$\tabularnewline
$\beta_{92}$ & $1.00$ & $15.37$ & $1.00$ & $1.00$ & $1.00$ & $1.11$\tabularnewline
$\beta_{102}$ & $1.01$ & $17.17$ & $2.79$ & $1.00$ & $1.00$ & $1.19$\tabularnewline
$\tau_{2}^{2}$ & $45.11$ & $207.69$ & $709.23$ & $102.65$ & $78.98$ & $47.92$\tabularnewline
$\rho$ & $9.69$ & $420.90$ & $14.29$ & $10.38$ & $9.09$ & $8.64$\tabularnewline
\hline
\end{tabular}
\end{table}

\section{Some Further Details of the Sampling Scheme for the Bivariate Probit Model with Random Effects\label{sec:Sampling-Scheme-for bivariate probit}}

Steps~3 and 4 of \Cref{alg: pmwg for biv probit} use a Hamiltonian Monte Carlo proposal to
sample $\bs \beta_{1}$ and $\bs \beta_{2}$ conditional
on the other parameters and random effects. The HMC requires
the gradient of $\log p\left(\boldsymbol{y}|\boldsymbol{\theta},\boldsymbol{\alpha}\right)$
with respect to $\boldsymbol{\beta}_{1}$ and $\bs \beta_2 $, where
\[
\frac{\partial\log p\left(\boldsymbol{y}|\boldsymbol{\theta},\boldsymbol{\alpha}\right)}{\partial\bs \beta_{1}}=\sum_{i=1}^{P}\sum_{t=1}^{T}\left(\frac{q_{1,it}g_{1,it}}{\Phi_{2}\left(w_{1,it},w_{2,it},q_{1,it}q_{2,it}\rho\right)}\right),
\]
\[
w_{1,it}=q_{1,it}\left(\boldsymbol{x}_{1,it}^{'}\boldsymbol{\beta}_{11}+\overline{\boldsymbol{x}}_{1,i}^{'}\boldsymbol{\beta}_{12}+\alpha_{1,i}\right),
\quad
w_{2,it}=q_{2,it}\left(\boldsymbol{x}_{2,it}^{'}\boldsymbol{\beta}_{21}+\overline{\boldsymbol{x}}_{2,i}^{'}\boldsymbol{\beta}_{22}+\alpha_{2,i}\right),
\]
and
\begin{eqnarray*}
g_{1,it} & = & \phi\left(w_{1,it}\right)\times\Phi\left(\frac{w_{2,it}-q_{1,it}q_{2,it}\rho w_{1,it}}{\sqrt{1-\left(q_{1,it}q_{2,it}\rho\right)^{2}}}\right).
\end{eqnarray*}
The gradient of $\log p\left(\boldsymbol{y}|\boldsymbol{\theta},\boldsymbol{\alpha}\right)$
with respect to $\boldsymbol{\beta}_{2}$ is obtained similarly.

\section{The Sampling Scheme for the Mixed Marginal Gaussian Regression
with Random Effects \label{sec:Sampling-Scheme-for mixed Gaussian}}

The model is described in \Cref{ss: mixed biv model with random effects}.
 The PG sampling scheme is similar to bivariate probit case and the following derivatives are needed in the HMC step.
From this section onwards, we denote $\boldsymbol{x}_{j,it}^{'}$
as all the covariates for the $j$th outcomes and $\eta_{j,it}=\boldsymbol{x}_{j,it}^{'}\boldsymbol{\beta}_{j}+\alpha_{j,i}$
for $j=1,2$
\[
\frac{\partial\log p\left(\boldsymbol{y}|\boldsymbol{\theta},\boldsymbol{\alpha}\right)}{\partial\boldsymbol{\beta}_{1}}=\sum_{i=1}^{P}\sum_{t=1}^{T}\left\{ \frac{\boldsymbol{x}_{1,it}y_{1,it}}{\sqrt{1-\rho^{2}}}\frac{\phi\left(\frac{\mu_{1|2}}{\sigma_{1|2}}\right)}{\Phi\left(\frac{\mu_{1|2}}{\sigma_{1|2}}\right)}-\frac{\boldsymbol{x}_{1,it}\left(1-y_{1,it}\right)}{\sqrt{1-\rho^{2}}}\frac{\phi\left(\frac{\mu_{1|2}}{\sigma_{1|2}}\right)}{1-\Phi\left(\frac{\mu_{1|2}}{\sigma_{1|2}}\right)}\right\}
\]
and
\begin{eqnarray*}
\frac{\partial\log p\left(\boldsymbol{y}|\boldsymbol{\theta},\boldsymbol{\alpha}\right)}{\partial\boldsymbol{\beta}_{2}} & = & \sum_{i=1}^{P}\sum_{t=1}^{T}-\frac{\rho}{\sqrt{1-\rho^{2}}}\boldsymbol{x}_{2,it}y_{1,it}\frac{\phi\left(\frac{\mu_{1|2}}{\sigma_{1|2}}\right)}{\Phi\left(\frac{\mu_{1|2}}{\sigma_{1|2}}\right)}\\
 &  & +\boldsymbol{x}_{2,it}y_{1,it}\left(y_{2,it}-\eta_{2,it}\right)\\
 &  & +\frac{\left(1-y_{1,it}\right)}{1-\Phi\left(\frac{\mu_{1|2}}{\sigma_{1|2}}\right)}\phi\left(\frac{\mu_{1|2}}{\sigma_{1|2}}\right)\left(\frac{\rho}{\sqrt{1-\rho^{2}}}\boldsymbol{x}_{2,it}\right)\\
 &  & +\boldsymbol{x}_{2,it}\left(1-y_{1,it}\right)\left(y_{2,it}-\eta_{2,it}\right)
\end{eqnarray*}

\section{Gradients for the HMC Sampling Scheme for the Mixed Marginal Clayton Copula Regression
with Random Effects \label{sec:Sampling-Scheme-for mixed clayton}}
\begin{eqnarray*}
\frac{\partial\log p\left(\boldsymbol{y}|\boldsymbol{\theta},\boldsymbol{\alpha}\right)}{\partial\boldsymbol{\beta}_{1}} & = & \sum_{i=1}^{P}\sum_{t=1}^{T}\left\{ \frac{\boldsymbol{x}_{1,it}y_{1,it}}{1-C_{1|2}^{\rm Cl}\left(u_{1,it}|u_{2,it}\right)}\left(u_{2,it}^{-\theta-1}\left(-1-\frac{1}{\theta}\right)
\left(\Phi\left(-\eta_{1,it}\right)^{-\theta}+u_{2,it}^{-\theta}-1\right)^{-2-\frac{1}{\theta}}\right)\right.\\
 &  & \phi\left(-\eta_{1,it}\right)\left(-\theta\right)\Phi\left(-\eta_{1,it}\right)^{-\theta-1}\frac{\boldsymbol{x}_{1,it}\left(1-y_{1,it}\right)}
 {C_{1|2}^{\rm Cl}\left(u_{1,it}|u_{2,it}\right)}\phi\left(-\eta_{1,it}\right)\left(\theta\right)\Phi\left(-\eta_{1,it}\right)^{-\theta-1}\\
 &  & \left.\left(u_{2,it}^{-\theta-1}\left(-1-\frac{1}{\theta}\right)\left(\Phi\left(-\eta_{1,it}\right)^{-\theta}+u_{2,it}^{-\theta}-1\right)^{-2-\frac{1}
 {\theta}}\right)\right\};
\end{eqnarray*}
\[
\frac{\partial\log p\left(\boldsymbol{y}|\boldsymbol{\theta},\boldsymbol{\alpha}\right)}{\partial\boldsymbol{\beta}_{2}}=\sum_{i=1}^{P}\sum_{t=1}^{T}\left\{ I_{it}+II_{it}+III_{it}+IV_{it}\right\} ,
\]
where
\begin{align*}
I_{it}& \coloneqq\frac{\boldsymbol{x}_{2,it}y_{1,it}}{1-C_{1|2}^{\rm Cl}\left(u_{1,it}|u_{2,it}\right)}\left(\nabla u v+u \nabla v \right),
\quad
\zeta_{it}=y_{2,it}-\eta_{2,it}, \\
u & :=-\Phi\left(\zeta_{it}\right)^{-\theta-1} \quad
\nabla {u} \coloneqq\frac{du}{d\boldsymbol{\beta}_{2}}=\left(-1-\theta\right)\Phi\left(\zeta_{it}\right)^{-\theta-2}\phi\left(\zeta_{it}\right),
v=\left(u_{1,it}^{-\theta}+\Phi\left(\zeta_{it}\right)^{-\theta}-1\right)^{-1-\frac{1}{\theta}}, \\
\nabla {v} &\coloneqq\frac{dv}{d\boldsymbol{\beta}_{2}}=\left(-1-\frac{1}{\theta}\right)\left(u_{1,it}^{-\theta}+\Phi\left(\zeta_{it}\right)^{-\theta}
-1\right)^{-2-\frac{1}{\theta}}\theta\Phi\left(\zeta_{it}\right)^{-\theta-1}\phi\left(\zeta_{it}\right)
\end{align*}
The terms $II_{it}$, $III_{it}$, and $IV_{it}$ are given by
\begin{eqnarray*}
II_{it} & = & \boldsymbol{x}_{2,it}y_{1,it}\left(y_{2,it}-\eta_{2,it}\right),
III_{it}  =  \boldsymbol{x}_{2,it}\left(1-y_{1,it}\right)\left(y_{2,it}-\eta_{2,it}\right),\\
IV_{it} & = & \frac{\boldsymbol{x}_{2,it}\left(1-y_{1,it}\right)}{C_{1|2}^{Cl}\left(u_{1,it}|u_{2,it}\right)}\left(\nabla {u}_{IV}v_{IV}+u_{IV}\nabla {v}_{IV}\right),
\end{eqnarray*}
where
\begin{align*}
u_{IV} & =\Phi\left(\zeta_{it}\right)^{-\theta-1},\nabla{u}_{IV}=-\left(-\theta-1\right)\Phi\left(\zeta_{it}\right)^{-\theta-2}\phi\left(\zeta_{it}\right),
v_{IV}=\left(u_{1,it}^{-\theta}+\Phi\left(\zeta_{it}\right)^{-\theta}-1\right)^{-1-\frac{1}{\theta}},\\
\nabla{v}_{IV}& =\left(-1-\frac{1}{\theta}\right)
\left(u_{1,it}^{-\theta}+\Phi\left(\zeta_{it}\right)^{-\theta}-1\right)^{-2-\frac{1}{\theta}}\theta\Phi\left(\zeta_{it}\right)^{-\theta-1}
\phi\left(\zeta_{it}\right).
\end{align*}
The PG sampling scheme is similar to that for the bivariate probit
case except that in step 2 we work with the unconstrained parameter $\theta_{\rm un} = \log \theta$, where $\theta> 0 $ for the Clayton copula.

\section{Gradients for the HMC Sampling Scheme for the Mixed Marginal Gumbel Copula Regression
with Random Effects \label{sec:Sampling-Scheme-for mixed gumbel}}

\begin{eqnarray*}
\frac{\partial\log p\left(\boldsymbol{y}|\boldsymbol{\theta},\boldsymbol{\alpha}\right)}{\partial\boldsymbol{\beta}_{1}} & = & \sum_{i=1}^{P}\sum_{t=1}^{T}(I_{it}+II_{it}),
\end{eqnarray*}
\begin{align*}
I_{it}& =\frac{\boldsymbol{x}_{1,it}y_{1,it}}{1-C_{1|2}^{\rm Gu}\left(u_{1,it}|u_{2,it}\right)}\left(-\frac{1}{u_{2,it}}\right)\left(-\log u_{2,it}\right)^{\theta-1}\left(\nabla{u}v+u\nabla{v}\right),\\
II_{it} & =\frac{\boldsymbol{x}_{1,it}\left(1-y_{1,it}\right)}{C_{1|2}\left(u_{1,it}|u_{2,it}\right)}\left(\frac{1}{u_{2,it}}\right)\left(-\log u_{2,it}\right)^{\theta-1}\left(\nabla{u}v+u\nabla{v}\right),
\intertext{where}
u& = \exp\left(-\left(\left(-\log u_{1,it}\right)^{\theta}+\left(-\log u_{2,it}\right)^{\theta}\right)^{1/\theta}\right)\\
\nabla{u} & =  \exp\left(-\left(\left(-\log u_{1,it}\right)^{\theta}+\left(-\log u_{2,it}\right)^{\theta}\right)^{1/\theta}\right)\left(-\frac{1}{\theta}\right)\left(\left(-\log u_{1,it}\right)^{\theta}+\left(-\log u_{2,it}\right)^{\theta}\right)^{\frac{1}{\theta}-1}\\
 &   \theta\left(-\log u_{1,it}\right)^{\theta-1}\left(\frac{1}{u_{1,it}}\right)\phi\left(-\eta_{1}\right),
v=\left(\left(-\log u_{1,it}\right)^{\theta}+\left(-\log u_{2,it}\right)^{\theta}\right)^{\frac{1}{\theta}-1}\\
\nabla{v} & =  \left(\frac{1}{\theta}-1\right)\left(\left(-\log u_{1,it}\right)^{\theta}+\left(-\log u_{2,it}\right)^{\theta}\right)^{\frac{1}{\theta}-2}
  \theta\left(-\log u_{1,it}\right)^{\theta-1}\left(\frac{1}{u_{1,it}}\right)\phi\left(-\eta_{1,it}\right).
\end{align*}

\begin{align*}
\frac{\partial\log p\left(\boldsymbol{y}|\boldsymbol{\theta},\boldsymbol{\alpha}\right)}{\partial\boldsymbol{\beta}_{2}} & =\sum_{i=1}^{P}\sum_{t=1}^{T}\left\{ I_{it}+II_{it}+III_{it}+IV_{it}\right\} ,\\
I_{it} &=\frac{\boldsymbol{x}_{2,it}y_{1,it}}{1-C_{1|2}^{\rm Gu}\left(u_{1}|u_{2}\right)}\left(-\left(\nabla{u}vwz+u\nabla{v}wz+uv\nabla{w}z+uvw\nabla{z}\right)\right),\\
II_{it} &=\boldsymbol{x}_{2,it}y_{1,it}\left(y_{2,it}-\left(\eta_{2,it}\right)\right),\\
III_{it} &=\boldsymbol{x}_{2,it}\left(1-y_{1,it}\right)\left(y_{2,it}-\left(\eta_{2,it}\right)\right),\\
IV_{it} &=\frac{\boldsymbol{x}_{2,it}\left(1-y_{1,it}\right)}{C_{1|2}^{Gu}\left(u_{1,it}|u_{2,it}\right)}
\left(\nabla{u}vwz+u\nabla{v}wz+uv\nabla{w}z+uvw\nabla{z}\right),
\intertext{where}
\zeta_{it}& =\left(y_{2,it}- \left(\eta_{2,it}\right)\right),
u =\frac{1}{u_{2,it}},
\nabla{u}:=\frac{\partial u}{\partial \bs \beta_2} = u_{2,it}^{-2}\phi\left(\zeta_{it}\right),
v=\left(-\log\Phi\left(\zeta_{it}\right)\right)^{\theta-1}\\
\nabla{v}& := \frac{\partial u}{\partial \bs \beta_2}= \left(\theta-1\right)\left(-\log\Phi\left(\zeta_{it}\right)\right)^{\theta-2}\frac{1}
{\Phi\left(\zeta_{it}\right)}\phi\left(\zeta_{it}\right)\\
w & =\exp\left(-\left(\left(-\log u_{1,it}\right)^{\theta}+\left(-\log u_{2,it}\right)^{\theta}\right)^{1/\theta}\right), \\
\nabla{w}&:= \frac{\partial u}{\partial \bs \beta_2}= \exp\left(-\left(\left(-\log u_{1,it}\right)^{\theta}+\left(-\log u_{2,it}\right)^{\theta}\right)^{1/\theta}\right)\left(-\frac{1}{\theta}\right)\left(\left(-\log u_{1,it}\right)^{\theta}+\left(-\log u_{2,it}\right)^{\theta}\right)^{\frac{1}{\theta}-1}\\
 &  \theta\left(-\log u_{2,it}\right)^{\theta-1}\left(\frac{1}{u_{2,it}}\right)\phi\left(\zeta_{it}\right)\\
z  &=\left(\left(-\log u_{1,it}\right)^{\theta}+\left(-\log u_{2,it}\right)^{\theta}\right)^{\frac{1}{\theta}-1} \\
\nabla{z} & =  \left(\frac{1}{\theta}-1\right)\left(\left(-\log u_{1,it}\right)^{\theta}+\left(-\log u_{2,it}\right)^{\theta}\right)^{\frac{1}{\theta}-2}
  \theta\left(-\log u_{2,it}\right)^{\theta-1}\left(\frac{1}{u_{2,it}}\right)\phi\left(\zeta_{it}\right)
\end{align*}

The PG sampling scheme is similar to that for the bivariate probit
model case except that in step 2 we reparametrize the Gumbel dependence parameter to
$\theta_{\rm un}:= \log (\theta -1 )$ because $\theta > 1$.

\section{Data Augmentation Bivariate Probit Models with Random Effects\label{sec:Data-Augmentation-Bivariate probit}}

We will work with the augmented posterior distribution
\begin{eqnarray*}
p\left(\boldsymbol{y}^{*},\left\{ \boldsymbol{\alpha}_{i}\right\} ,\boldsymbol{\theta}|\boldsymbol{y}\right) & = & p\left(\boldsymbol{y}|\boldsymbol{y}^{*},\left\{ \boldsymbol{\alpha}_{i}\right\} ,\boldsymbol{\theta}\right)p\left(\boldsymbol{y}^{*}|\left\{ \boldsymbol{\alpha}_{i}\right\} ,\boldsymbol{\theta}\right)p\left(\left\{ \boldsymbol{\alpha}_{i}\right\} |\boldsymbol{\theta}\right)p\left(\boldsymbol{\theta}\right),
\end{eqnarray*}
where
\begin{eqnarray*}
p\left(\boldsymbol{y}|\boldsymbol{y}^{*},\left\{ \boldsymbol{\alpha}_{i}\right\} ,\boldsymbol{\theta}\right) & = & \prod_{i=1}^{P}\prod_{t=1}^{T}\left[I\left(y_{1,it}^{*}\leq0\right)I\left(y_{1,it}=0\right)+I\left(y_{1,it}^{*}>0\right)I\left(y_{1,it}=1\right)\right]\\
 &  & \left[I\left(y_{2,it}^{*}\leq0\right)I\left(y_{2,it}=0\right)+I\left(y_{2,it}^{*}>0\right)I\left(y_{2,it}=1\right)\right],
\end{eqnarray*}
\[
p\left(\boldsymbol{y}^{*}|\left\{ \boldsymbol{\alpha}_{i}\right\} ,\boldsymbol{\theta}\right)\sim N\left(\bs \mu_{it}, \bs \Sigma_\epsilon \right ) ,
p\left(\left\{ \boldsymbol{\alpha}_{i}\right\} |\boldsymbol{\theta}\right)\sim N \left (\bs 0, \bs \Sigma_\alpha \right )
\]
where $\boldsymbol{\mu}_{it}=\left(\boldsymbol{\eta}_{1,it},\boldsymbol{\eta}_{2,it}\right)^{T}$, $\bs \Sigma_\epsilon $ and $\bs \Sigma_\alpha$ are defined in Equations \eqref{eq: cov matrices} in Section \ref{ss: biv probit with random effects},
and $p\left(\boldsymbol{\theta}\right)$ is the prior distribution
for $\boldsymbol{\theta}$. The following complete conditional posteriors
for the Gibbs sampler can then be derived.

\begin{equation}
y_{1,it}^{*}|\left\{ \boldsymbol{\alpha}_{i}\right\} ,\boldsymbol{\theta},\boldsymbol{y},y_{2,it}^{*}\sim\begin{cases}
TN_{\left(-\infty,0\right)}\left(\mu_{1|2},\sigma_{1|2}\right), & y_{1,it}=0\\
TN_{\left(0,\infty\right)}\left(\mu_{1|2},\sigma_{1|2}\right), & y_{1,it}=1
\end{cases}
\end{equation}
and
\begin{equation}
y_{2,it}^{*}|\left\{ \boldsymbol{\alpha}_{i}\right\} ,\boldsymbol{\theta},\boldsymbol{y},y_{1,it}^{*}\sim\begin{cases}
TN_{\left(-\infty,0\right)}\left(\mu_{2|1},\sigma_{2|1}\right), & y_{2,it}=0\\
TN_{\left(0,\infty\right)}\left(\mu_{2|1},\sigma_{2|1}\right), & y_{2,it}=1
\end{cases},
\end{equation}
where $\mu_{1|2}=\left(\boldsymbol{x}_{1,it}^{'}\boldsymbol{\beta}_{1}+\alpha_{1,i}\right)+\rho\left(y_{2,it}^{*}-\left(\boldsymbol{x}_{2,it}^{'}\boldsymbol{\beta}_{2}+\alpha_{2,i}\right)\right)$,
and $\sigma_{1|2}=\sqrt{1-\rho^{2}}$ and $\mu_{2|1}$ and $\sigma_{2|1}$
are defined similarly;
\begin{equation}
\boldsymbol{\alpha}_{i}|\boldsymbol{y}^{*},\boldsymbol{y},\Sigma_{\alpha},\Sigma,\boldsymbol{\beta}\sim N\left(\bs D_{\alpha_{i}}\bs d_{\alpha_{i}},\bs D_{\alpha_{i}}\right),
\end{equation}
where $\bs D_{\alpha_{i}}=\left(\bs T\bs \Sigma^{-1}+\bs \Sigma_{\alpha}^{-1}\right)^{-1}$
and $\bs d_{\alpha_{i}}=\bs \Sigma^{-1}\sum_{t}\left(\left[\begin{array}{c}
y_{1,it}^{*}\\
y_{2,it}^{*}
\end{array}\right]-\left[\begin{array}{cc}
\boldsymbol{x}_{1,it}^{'} & \boldsymbol{0}\\
\boldsymbol{0} & \boldsymbol{x}_{2,it}^{'}
\end{array}\right]\left[\begin{array}{c}
\boldsymbol{\beta}_{1}\\
\boldsymbol{\beta}_{2}
\end{array}\right]\right)$ for $i=1,...,P$;
\begin{equation}
\boldsymbol{\beta}|\boldsymbol{y}^{*},\boldsymbol{y},\bs \Sigma_{\alpha},\bs \Sigma,\left\{ \alpha_{i}\right\} \sim N\left(\bs D_{\beta}\bs d_{\beta},\bs D_{\beta}\right),
\end{equation}
where
\[
\bs D_{\bs \beta}=\left(\sum_{i=1}^{P}\sum_{t=1}^{T}\left[\begin{array}{cc}
\boldsymbol{x}_{1,it} & \boldsymbol{0}\\
\boldsymbol{0} & \boldsymbol{x}_{2,it}
\end{array}\right]^{'}\bs \Sigma^{-1}\left[\begin{array}{cc}
\boldsymbol{x}_{1,it} & \boldsymbol{0}\\
\boldsymbol{0} & \boldsymbol{x}_{2,it}
\end{array}\right]+\bs \Sigma_{0}^{-1}\right)^{-1},
\]
 and

\[
\bs d_{\beta}=\sum_{i=1}^{P}\sum_{t=1}^{T}\left[\begin{array}{cc}
\boldsymbol{x}_{1,it} & \boldsymbol{0}\\
\boldsymbol{0} & \boldsymbol{x}_{2,it}
\end{array}\right]^{'}\Sigma^{-1}\left(\left[\begin{array}{c}
y_{1,it}^{*}\\
y_{2,it}^{*}
\end{array}\right]-\left[\begin{array}{c}
\alpha_{1,i}\\
\alpha_{2,i}
\end{array}\right]\right)
\]

\begin{equation}
\bs \Sigma_{\alpha}|\boldsymbol{y}^{*},\boldsymbol{y},\bs \Sigma,\boldsymbol{\beta}\sim W\left(v_{1},\bs R_{1}\right),
\end{equation}
where
\[
v_{1}=v_{0}+P,
\quad \text{and} \quad
\bs R_{1}=\left[\bs R_{0}^{-1}+\sum_{i=1}^{P}\left[\begin{array}{c}
\alpha_{1,i}\\
\alpha_{2,i}
\end{array}\right]\left[\begin{array}{cc}
\alpha_{1,i} & \alpha_{2,i}\end{array}\right]\right]^{-1}.
\]
We use Metropolis within Gibbs to sample $\rho$.

\end{document}